\def\renewtheorem#1{%
  \expandafter\let\csname#1\endcsname\relax
  \expandafter\let\csname c@#1\endcsname\relax
  \gdef\renewtheorem@envname{#1}
  \renewtheorem@secpar}
\def\renewtheorem@secpar{\@ifnextchar[{\renewtheorem@numberedlike}{\renewtheorem@nonumberedlike}}
\def\renewtheorem@numberedlike[#1]#2{\newtheorem{\renewtheorem@envname}[#1]{#2}}
\def\renewtheorem@nonumberedlike#1{  
\def\renewtheorem@caption{#1}
\edef\renewtheorem@nowithin{\noexpand\newtheorem{\renewtheorem@envname}{\renewtheorem@caption}}
\renewtheorem@thirdpar
}
\def\renewtheorem@thirdpar{\@ifnextchar[{\renewtheorem@within}{\renewtheorem@nowithin}}
\def\renewtheorem@within[#1]{\renewtheorem@nowithin[#1]}
\newtheoremstyle{framedthmenv}%
  {0cm}
  {0cm}
  {\@acmdefinitionbodyfont}
  {\@acmdefinitionindent}
  {\@acmdefinitionheadfont}
  {:}
  {.5em}
  {\thmname{#1}\thmnumber{ #2}\thmnote{ {\@acmdefinitionnotefont(#3)}}}
\theoremstyle{acmplain}
\theoremstyle{acmdefinition}
\theoremstyle{framedthmenv}
\newtheorem{problem}{Problem} 
\crefname{problem}{Problem}{Problems}
\Crefname{problem}{Problem}{Problems}
\newtheorem{algorithm}{Algorithm} 
\crefname{algorithm}{Algorithm}{Algorithms}
\Crefname{algorithm}{Algorithm}{Algorithms}
\newtheorem{remark}{Remark}
\crefname{remark}{Remark}{Remarks}
\Crefname{remark}{Remark}{Remarks}
\theoremstyle{acmplain}
\tikzset{
    >=stealth',
}
\newcommand{\storeArg}{} 
\newcommand{\bigO}[1]{O(#1)} 
\newcommand{\softO}[1]{\mathchoice{\tilde{O}\left(#1\right)}{O\tilde{~}(#1)}{O\tilde{~}(#1)}{O\tilde{~}(#1)}} 
\newcommand{\expmatmul}{\omega} 
\newcommand{\algoname}[1]{{\normalfont\textsc{#1}}} 
\newcommand{\problemname}[1]{{\normalfont\textsc{#1}}} 
\newcommand{\algoword}[1]{\emph{\textsf{#1}}} 
\newcommand{\assign}{\leftarrow} 
\newcommand{\ZZ}{\mathbb{Z}} 
\newcommand{\NN}{\mathbb{Z}_{\ge 0}} 
\newcommand{\ZZp}{\mathbb{Z}_{> 0}} 
\newcommand{\tuple}[1]{\boldsymbol{#1}}  
\newcommand{\sumVec}[1]{|#1|} 
\newcommand{\subTuple}[2]{{#1}_{#2}} 
\newcommand{\var}{x} 
\newcommand{\field}{\mathbb{K}} 
\newcommand{\polRing}{\field[\var]} 
\newcommand{\rdim}{m} 
\newcommand{\cdim}{n} 
\newcommand{\matrk}{r} 
\newcommand{\matSpace}[1][\rdim]{\renewcommand\storeArg{#1}\matSpaceAux} 
\newcommand{\polMatSpace}[1][\rdim]{\renewcommand\storeArg{#1}\polMatSpaceAux} 
\newcommand{\matSpaceAux}[1][\storeArg]{\field^{\storeArg \times #1}} 
\newcommand{\polMatSpaceAux}[1][\storeArg]{\polRing^{\storeArg \times #1}} 
\newcommand{\row}[1]{\mathbf{\MakeLowercase{#1}}} 
\newcommand{\rowgrk}[1]{\boldsymbol{#1}} 
\newcommand{\mat}[1]{\mathbf{\MakeUppercase{#1}}} 
\newcommand{\matt}[1]{\mathbf{\hat{\MakeUppercase{#1}}}} 
\newcommand{\matrow}[2]{{#1}_{#2,*}} 
\newcommand{\matrows}[2]{{#1}_{#2,*}} 
\newcommand{\matcol}[2]{{#1}_{*,#2}} 
\newcommand{\matcols}[2]{{#1}_{*,#2}} 
\newcommand{\trsp}[1]{#1^\mathsf{T}} 
\newcommand{\diag}[1]{\mathrm{diag}(#1)}  
\newcommand{\idMat}[1][\rdim]{\mat{I}_{#1}} 
\newcommand{\matz}{\mat{0}}  
\newcommand{\rdeg}[2][]{\mathrm{rdeg}_{{#1}}(#2)} 
\newcommand{\cdeg}[2][]{\mathrm{cdeg}_{{#1}}(#2)} 
\newcommand{\leadingMat}[2][]{\mathrm{lm}_{#1}(#2)} 
\newcommand{\leadingMatBig}[2][\unishift]{\mathrm{lm}_{#1}\left(#2\right)} 
\newcommand{\shiftMat}[1]{\mathbf{\var}^{#1}} 
\newcommand{\shiftSpace}[1][\cdim]{\ZZ^{#1}} 
\newcommand{\unishift}{\mathbf{0}} 
\newcommand{\shift}[2][s]{#1_{#2}} 
\newcommand{\shifts}[1][s]{\tuple{#1}} 
\newcommand{\shiftss}[1][s]{\tuple{\hat{#1}}} 
\newcommand{\shiftt}{\shifts[t]} 
\newcommand{\amp}[1][\shifts]{\operatorname{amp}(#1)} 
\newcommand{\expand}[1]{\overline{#1}} 
\newcommand{\minDeg}{\delta} 
\newcommand{\minDegs}{\boldsymbol{\minDeg}} 
\newcommand{\degBd}{\delta} 
\newcommand{\piv}{\pi} 
\DeclareDocumentCommand{\pivSupp}{ O{\shifts} O{\pmat} }{\boldsymbol{\piv}_{#1}(#2)} 
\newcommand{\app}{\row{p}} 
\newcommand{\appbas}{\matt{P}} 
\newcommand{\order}{d} 
\newcommand{\orders}{\tuple{\order}} 
\newcommand{\modApp}[2]{\operatorname{\mathcal{A}}_{#1}(#2)} 
\newcommand{\xDiag}[1]{\mathbf{\var}^{#1}} 
\newcommand{\popov}{\mat{P}} 
\newcommand{\reduced}{\mat{R}} 
\newcommand{\hermite}{\mat{H}} 
\newcommand{\pmat}{\mat{M}} 
\newcommand{\pmatt}{\matt{M}} 
\newcommand{\pmatSpace}{\polMatSpace[\rdim][\cdim]} 
\newcommand{\umat}{\mat{U}} 
\newcommand{\bmat}{\mat{B}} 
\newcommand{\cmat}{\mat{C}} 
\newcommand{\kerbas}{\mat{K}} 
\newcommand{\colbas}{\mat{C}} 
\newcommand{\rfac}{\mat{S}} 
\newlist{algosteps}{enumerate}{3}
\crefname{algostepsi}{Step}{Steps}
\crefname{algostepsi}{Step}{Steps}
\Crefname{algostepsii}{Step}{Steps}
\Crefname{algostepsii}{Step}{Steps}
\Crefname{algostepsiii}{Step}{Steps}
\Crefname{algostepsiii}{Step}{Steps}
\newenvironment{algobox}{
  \newcommand{\algoInfo}[1]{
    \begin{algorithm}
    \emph{\algoname{##1}}
  }
  \newcommand{\dataInfos}[2]{
    \algoword{##1:}
      \begin{itemize}[leftmargin=0.8cm]
          ##2
      \end{itemize}}
  \newcommand{\dataInfo}[2]{
    \algoword{##1:} ##2 }
  \newcommand{\algoSteps}[1]{
    \setlist[algosteps,1]{leftmargin=0.5cm}
    \setlist[algosteps,2]{leftmargin=0.4cm}
    \setlist[algosteps,3]{leftmargin=0.4cm}
    \begin{algosteps}[label=\textbf{\arabic*.},ref=\arabic*]
        ##1
    \end{algosteps}
  }
  \begin{figure}[ht]
  \centering
  \addtolength\fboxsep{0.1cm}
  \begin{boxedminipage}{0.99\columnwidth}
  }
  {
  \end{algorithm}
  \end{boxedminipage}
  \end{figure}
}
\begin{document}

\fancyhead{} 
\title{Computing Popov and Hermite Forms \\ of Rectangular Polynomial Matrices}

\author{Vincent Neiger}
\authornote{Part of the research leading to this work was conducted while
  Vincent Neiger was with Technical University of Denmark, Kgs. Lyngby,
  Denmark, with funding from the People Programme (Marie Curie Actions) of the
  European Union's Seventh Framework Programme (FP7/2007-2013) under REA grant
agreement number 609405 (COFUNDPostdocDTU).}
\affiliation{%
  \institution{{\normalsize Univ. Limoges, CNRS, XLIM, UMR\,7252}}
  \city{F-87000 Limoges} 
  \state{France} 
}
\email{vincent.neiger@unilim.fr}

\author{Johan Rosenkilde}
\affiliation{%
  \institution{Technical University of Denmark}
  \city{Kgs. Lyngby} 
  \state{Denmark} 
}
\email{jsrn@jsrn.dk}

\author{Grigory Solomatov}
\affiliation{%
  \institution{Technical University of Denmark}
  \city{Kgs. Lyngby} 
  \state{Denmark} 
}
\email{grigorys93@gmail.com}

\copyrightyear{2018}
\acmYear{2018}
\setcopyright{acmlicensed}
\acmConference[ISSAC '18]{2018 ACM International Symposium on Symbolic and Algebraic Computation}{July 16--19, 2018}{New York, NY, USA
  \\ \phantom{bla} \\ \phantom{bla} \\ \phantom{bla} }
\acmBooktitle{ISSAC '18: 2018 ACM International Symposium on Symbolic and Algebraic Computation, July 16--19, 2018, New York, NY, USA}
\acmPrice{15.00}
\acmDOI{10.1145/3208976.3208988}
\acmISBN{978-1-4503-5550-6/18/07}

\begin{abstract}
  We consider the computation of two normal forms for matrices over the
  univariate polynomials: the Popov form and the Hermite form. For matrices
  which are square and nonsingular, deterministic algorithms with satisfactory
  cost bounds are known. Here, we present deterministic, fast algorithms for
  rectangular input matrices. The obtained cost bound for the Popov form
  matches the previous best known randomized algorithm, while the cost bound
  for the Hermite form improves on the previous best known ones by a factor
  which is at least the largest dimension of the input matrix.
\end{abstract}

\keywords{Polynomial matrix; Reduced form; Popov form; Hermite form.}

\maketitle

\section{Introduction}
\label{sec:intro}

In this paper we deal with (univariate) polynomial matrices, i.e. matrices in $\polMatSpace[\rdim][\cdim]$ where $\field$ is a field admitting exact computation, typically a finite field.
Given such an input matrix whose row space is the real object of interest, one may ask for a ``better'' basis for the row space, that is, another matrix which has the same row space but also has additional useful properties.
Two important normal forms for such bases are the Popov form \cite{Popov72} and the Hermite form \cite{Hermite1851}, whose definitions are recalled in this paper.
The Popov form has rows which have the minimal possible degrees, while the Hermite form is in echelon form.
A classical generalisation is the \emph{shifted} Popov form of a matrix \cite{BecLab00},
where one incorporates degree weights on the columns: with
zero shift this is the Popov form, while under some extremal shift this becomes
the Hermite form \cite{BeLaVi99}.
We are interested in the efficient computation of these forms, which has
been studied extensively along with the computation of the related but
non-unique reduced forms \cite{Forney75,Kailath80} and weak Popov forms
\cite{MulSto03}.

Hereafter, complexity estimates count basic arithmetic operations in $\field$ on an algebraic RAM, and asymptotic cost bounds omit factors that are logarithmic in the input parameters, denoted by $\softO{\cdot}$.
We let $2 \le \expmatmul \le 3$ be an exponent for matrix multiplication: two matrices in $\matSpace[\rdim]$ can be multiplied in $O(\rdim^\expmatmul)$ operations.
As shown in \cite{CanKal91}, the multiplication of two polynomials in $\polRing$ of degree at most $d$ can be done in $\softO{d}$ operations,
and more generally the multiplication of two polynomial matrices in $\polMatSpace[\rdim][\rdim]$ of degree at most $d$ uses $\softO{\rdim^\expmatmul d}$ operations.

Consider a square, nonsingular $\pmat \in \polMatSpace[\rdim]$ of degree $d$.
For the computation of a reduced form of $\pmat$, the complexity
$\softO{\rdim^\expmatmul d}$ was first achieved by a Las Vegas algorithm of
Giorgi et al.~\cite{GiJeVi03}.  All the subsequent work mentioned in the next
paragraph achieved the same cost bound, which was taken as a target: up to
logarithmic factors, it is the same as the cost for multiplying two matrices
with dimensions and degree similar to those of $\pmat$.

The approach of \cite{GiJeVi03} was de-randomized by Gupta et
al.~\cite{GuSaStVa12}, while Sarkar and Storjohann \cite{SarSto11} showed how
to compute the Popov form from a reduced form; combining these results gives a
deterministic algorithm for the Popov form.  Gupta and Storjohann
\cite{Gupta11,GupSto11} gave a Las Vegas algorithm for the Hermite form; a
Las Vegas method for computing the shifted Popov form for any shift was
described in \cite{Neiger16}.  Then, a deterministic Hermite form algorithm was
given by Labahn et al.~\cite{LaNeZh17}, which was one ingredient in a
deterministic algorithm due to Neiger and Vu \cite{NeiVu17} for the arbitrary
shift case.

The Popov form algorithms usually exploit the fact that, by definition, this
form has degree at most $d=\deg(\pmat)$.  While no similarly strong degree
bound holds for shifted Popov forms in general (including the Hermite form),
these forms still share a remarkable property in the square, nonsingular case:
each entry outside the diagonal has degree less than the entry on the diagonal
in the same column.  These diagonal entries are called \emph{pivots}
\cite{Kailath80}.  Furthermore, their degrees sum to $\deg(\det(\pmat)) \leq
\rdim d$, so that these forms can be represented with $\bigO{\rdim^2 d}$ field
elements, just like $\pmat$.  This is especially helpful in the design of fast
algorithms since this provides ways to control the degrees of the manipulated
matrices.

These degree constraints exist but become weaker in the case of
\emph{rectangular} shifted Popov forms, say $\rdim\times\cdim$ with
$\rdim<\cdim$.  Such a normal form does have $\rdim$ columns containing pivots,
whose average degree is at most the degree $d$ of the input matrix $\pmat$.
Yet it also contains $\cdim-\rdim$ columns without pivots, which may all have
large degree: up to $\Theta(\rdim d)$ in the case of the Hermite form.  As a
result, a dense representation of the latter form may require $\Omega(\rdim^2
(\cdim-\rdim) d)$ field elements, a factor of $\rdim$ larger than for $\pmat$.
Take for example some $\umat \in \polMatSpace$ of degree $d$ which is
unimodular, meaning that $\umat^{-1}$ has entries in $\polRing$.  Then, the
Hermite form of $[\umat \;\; \idMat \;\; \cdots \;\; \idMat]$ is $[\idMat \;\;
\umat^{-1} \;\; \cdots \;\; \umat^{-1}]$, and the entries of $\umat^{-1}$ may
have degree in $\Omega(\rdim d)$.  However, the Popov form, having minimal
degree, has size in $\bigO{\rdim\cdim d}$, just like $\pmat$.  Thus, unlike in
the nonsingular case, one would set different target costs for the
computation of Popov and Hermite forms, such as
$\softO{\rdim^{\expmatmul-1}\cdim d}$ for the former and
$\softO{\rdim^{\expmatmul}\cdim d}$ for the latter (note that the exponent
affects the small dimension).

For a rectangular matrix $\pmat\in\pmatSpace$, Mulders and Storjohann
\cite{MulSto03} gave an iterative Popov form algorithm which costs
$\bigO{\matrk \rdim \cdim d^2}$, where $r$ is the rank of $\pmat$.  Beckermann
et al.~\cite{BeLaVi06} obtain the shifted Popov form for any shift by computing
a basis of the left kernel of $\trsp{[\trsp{\pmat} \;\; \idMat[\cdim]]}$.  This
approach also produces a matrix which transforms $\pmat$ into its normal form
and whose degree can be in $\Omega(\rdim d)$: efficient algorithms usually
avoid computing this transformation.  To find the sought kernel basis, the
fastest known method is to compute a shifted Popov approximant basis of the
$(\rdim+\cdim)\times\cdim$ matrix above, at an order which depends on the
shift.  \cite{BeLaVi06} relies on a fraction-free algorithm for the latter
computation, and hence lends itself well to cases where $\field$ is not finite.
In our context, following this approach with the fastest known approximant
basis algorithm \cite{JeNeScVi16} yields the cost bounds
$\softO{(\rdim+\cdim)^{\expmatmul-1} \cdim \rdim d}$ for the Popov form and
$\softO{(\rdim+\cdim)^{\expmatmul-1} \cdim^2 \rdim d}$ for the Hermite form.
For the latter this is the fastest existing algorithm, to the best of our
knowledge.

For $\pmat$ with full rank and $\rdim\le\cdim$, Sarkar \cite{Sarkar11} showed a
Las Vegas algorithm for the Popov form achieving the cost
$\softO{\rdim^{\expmatmul-1} \cdim d}$.  This uses random column operations to
compress $\pmat$ into an $\rdim\times\rdim$ matrix, which is then transformed
into a reduced form.  Applying the same transformation on $\pmat$ yields a
reduced form of $\pmat$ with high probability, and from there the Popov form can be
obtained.  Lowering this cost further seems difficult, as indicated in the
square case by the reduction from polynomial matrix multiplication to Popov
form computation described in \cite[Thm.\,22]{SarSto11}.

For a matrix $\pmat \in \pmatSpace$ which is rank-deficient or has
$\rdim>\cdim$, the computation of a basis of the row space of $\pmat$ was
handled by Zhou and Labahn \cite{ZhoLab14} with cost
$\softO{\rdim^{\expmatmul-1}(\rdim+\cdim) d}$.  Their algorithm is
deterministic, and the output basis $\bmat\in\polMatSpace[\matrk][\cdim]$ has
degree at most $d$.  This may be used as a preliminary step: the normal form of
$\pmat$ is also that of $\bmat$, and the latter has full rank with
$\matrk\le\cdim$.

We stress that, from a rectangular matrix $\pmat\in\pmatSpace$, it seems
difficult in general to predict which columns of its shifted Popov form will be
pivot-free.  For this reason, there seems to be no obvious deterministic
reduction from the rectangular case to the square case, even when $\cdim$ is
only slightly larger than $\rdim$.  Sarkar's algorithm is a Las Vegas
reduction, \emph{compressing} the matrix to a nonsingular $\rdim\times\rdim$
matrix; another Las Vegas reduction consists in \emph{completing} the matrix to
a nonsingular $\cdim\times\cdim$ matrix (see \cref{sec:fast_lasvegas}).

In the nonsingular case, exploiting information on the pivots has led to
algorithmic improvements for normal form algorithms
\cite{GupSto11,SarSto11,JeNeScVi16,LaNeZh17}.  Following this, we put our
effort into two computational tasks: finding the location of the pivots in the
normal form (the \emph{pivot support}), and using this knowledge to compute
this form.

Our first contribution is to show how to efficiently find the pivot support of
$\pmat$. For this we resort to the so-called saturation of $\pmat$ computed in
a form which reveals the pivot support (\cref{sub:find_pivsupp_viafactor}),
making use of an idea from \cite{ZhoLab13}.  While this is only efficient for
$\cdim\in\bigO{\rdim}$, using this method repeatedly on well-chosen submatrices
of $\pmat$ with about $2\rdim$ columns allows us to find the pivot support
using $\softO{\rdim^{\expmatmul-1} \cdim d}$ operations for any dimensions
$\rdim \le \cdim$ (\cref{sub:wide_matrix}).

In our second main contribution, we consider the shifted Popov form of $\pmat$,
for any shift.  We show that once its pivot support is known, then this form
can be computed efficiently (\cref{sec:known_pivotsupport} and
\cref{prop:algo:knownsupp_popov}).  In particular, combining both contributions
yields a fast and deterministic Popov form algorithm.

\begin{theorem}
  For a matrix $\pmat \in \pmatSpace$ of degree at most $d$ and with
  $\rdim\le\cdim$, there is a deterministic algorithm which computes the Popov
  form of~$\pmat$ using $\softO{\rdim^{\expmatmul-1} \cdim d}$ operations in
  $\field$.
\end{theorem}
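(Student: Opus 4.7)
The plan is to combine the two main contributions announced just before the theorem statement, together with the Zhou-Labahn row basis algorithm as a preprocessing step to handle the possibly rank-deficient case. Concretely, I would organize the proof as follows.

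First, I would reduce to the full-rank case. Using the deterministic row basis algorithm of Zhou and Labahn, from $\pmat$ I obtain in $\softO{\rdim^{\expmatmul-1}(\rdim+\cdim)d}$ operations a full-rank matrix $\bmat \in \polMatSpace[\matrk][\cdim]$ of degree at most $d$ with the same row space as $\pmat$, where $\matrk$ is the rank of $\pmat$ and $\matrk \le \rdim \le \cdim$. Since the Popov form depends only on the row space, it suffices to compute the Popov form of $\bmat$; and because $\rdim \le \cdim$, the preprocessing cost is absorbed in the target $\softO{\rdim^{\expmatmul-1}\cdim d}$.

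Next, I would apply the pivot-support algorithm of \cref{sub:wide_matrix} to $\bmat$. By the result announced in the introduction, this yields the pivot support $\pivSupp[\unishift][\bmat]$ of the (zero-shift) Popov form of $\bmat$ deterministically in $\softO{\matrk^{\expmatmul-1}\cdim d} \subseteq \softO{\rdim^{\expmatmul-1}\cdim d}$ operations. This is the step where the requirement $\matrk \le \cdim$ matters, as the machinery based on the saturation idea from \cite{ZhoLab13} is first designed for nearly-square matrices and then extended to wide matrices by applying it to well-chosen $\matrk \times O(\matrk)$ submatrices.

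Finally, given the pivot support, I would invoke \cref{prop:algo:knownsupp_popov} on $\bmat$ with the zero shift to compute the actual Popov form, again within the cost bound $\softO{\rdim^{\expmatmul-1}\cdim d}$. Summing the three stages gives the claimed total cost. The main obstacle is not in assembling these ingredients, which is essentially a pipeline, but rather in the two supporting results themselves: ensuring that the pivot-support-finding step really achieves $\softO{\rdim^{\expmatmul-1}\cdim d}$ for arbitrarily wide matrices, and that the computation from a known pivot support matches the same bound. Both are handled in the corresponding sections; the theorem then follows by straightforward composition.
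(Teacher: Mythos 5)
Your proposal matches the paper's intended (but only sketched) proof: preprocess with the Zhou--Labahn row basis to reduce to a full-rank $\matrk \times \cdim$ matrix of degree at most $d$, find its pivot support via \cref{algo:widemat_findsupp} (\cref{prop:algo:widemat_findsupp}), and finish with \cref{algo:knownsupp_popov} (\cref{prop:algo:knownsupp_popov}), all within $\softO{\rdim^{\expmatmul-1}\cdim d}$. The only detail you leave implicit is the boundary case $\matrk=\cdim$ after the row basis step, where \cref{algo:knownsupp_popov} does not apply since it requires the row dimension to be strictly less than the column dimension; there one falls back to \algoname{NonsingularPopov} directly, which still fits the bound since $\matrk=\cdim$.
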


The second contribution may of course be useful in situations where the pivot
support is known for some reason.  Yet, there are even general cases where it
can be computed efficiently, namely when the shift has very unbalanced entries.
This is typically the case of the Hermite form, for which the pivot support
coincides with the column rank profile of $\pmat$.  The latter can be
efficiently obtained via an algorithm due to Zhou \cite[Sec.\,11]{Zhou12},
based on the kernel basis algorithm from \cite{ZhLaSt12}.  This leads us to the
next result.

\begin{theorem}
  \label{thm:hermite}
  Let $\pmat \in \pmatSpace$ with full rank and $\rdim<\cdim$.  There is a
  deterministic algorithm which computes the Hermite form of $\pmat$ using
  $\softO{\rdim^{\expmatmul-1} \cdim \degBd}$ operations in $\field$, where
  $\degBd$ is the minimum of the sum of column degrees of $\pmat$ and of the
  sum of row degrees of $\pmat$.
\end{theorem}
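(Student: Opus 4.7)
The plan is to combine the paper's known-pivot-support algorithm with an existing rank-profile algorithm. Since the Hermite form is the shifted Popov form under an extremal Hermite shift \cite{BeLaVi99}, and since under such a shift the pivot support of the Hermite form of a full-rank $\pmat\in\pmatSpace$ with $\rdim<\cdim$ coincides with the column rank profile $J$ of $\pmat$ (a set of $\rdim$ indices), the task splits in two phases: (i) compute $J$; (ii) compute the shifted Popov form with pivot support $J$ and the Hermite shift.

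For phase (i) I would invoke Zhou's column rank profile algorithm \cite[Sec.\,11]{Zhou12}, a deterministic algorithm built on the kernel basis computation of \cite{ZhLaSt12}. Its cost admits an analysis in terms of the average column or row degrees of the input, yielding $\softO{\rdim^{\expmatmul-1}\cdim \degBd}$ once the parameter is chosen according to whichever of the two sums is the smaller. For phase (ii) I would call \cref{prop:algo:knownsupp_popov} with the support $J$ and the Hermite shift. The cost of that algorithm is driven by the pivot degrees $\delta_1,\ldots,\delta_\rdim$ of the Hermite form, whose sum equals $\deg\det(\pmat_{*,J})$; since $\pmat_{*,J}$ is a submatrix of $\pmat$, its determinant degree is bounded by both the sum of column degrees and the sum of row degrees of $\pmat$, hence by $\degBd$. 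This gives the desired cost $\softO{\rdim^{\expmatmul-1}\cdim \degBd}$ for phase (ii).

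The main obstacle is to argue that both phases truly scale with $\degBd$ rather than the coarser $\rdim d$. For phase (i) this requires using a degree-sensitive formulation of Zhou's algorithm (natural there, as kernel basis and rank-profile computations are classically analysed in terms of column/row degrees); when $\degBd$ is the sum of column degrees, one may further have to restructure the input, for instance by grouping columns according to their degree, so that high-degree columns contribute proportionally to their individual degree rather than to $d$. For phase (ii) this requires running the internal approximant basis computations of \cref{prop:algo:knownsupp_popov} with an order vector calibrated to the pivot-degree profile rather than with a uniform order, so that the total work is proportional to $\sum_i \delta_i \le \degBd$. Summing the costs of the two phases then yields the claimed bound.
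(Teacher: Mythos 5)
Your two-phase plan --- compute the column rank profile via Zhou's algorithm, then invoke \algoname{KnownSupportPopov} with that support and an extremal Hermite shift --- is exactly the route taken by the paper. However, your account of \emph{why} each phase runs within $\softO{\rdim^{\expmatmul-1}\cdim\degBd}$ departs from the paper in ways that would not survive scrutiny. For phase~(i) you suggest possibly ``restructuring the input by grouping columns according to degree'' to capture the column-degree case; no such step is needed. Zhou's rank-profile algorithm already admits the cost $\softO{\rdim^{\expmatmul-1}\cdim\sigma}$ with $\sigma = \lceil\sumVec{\rdeg{\pmat}}/\rdim\rceil$, and the elementary inequality $\sumVec{\rdeg{\pmat}}/\rdim \le \sumVec{\cdeg{\pmat'}}$ (each row degree is at most the sum of nonzero column degrees) shows $\sigma\le\degBd$ regardless of which of the two sums is smaller; no separate column-degree-sensitive variant is invoked.

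For phase~(ii) your description is further off: \algoname{KnownSupportPopov} (\cref{algo:knownsupp_popov}) does not internally run approximant basis computations, let alone with an order vector ``calibrated to the pivot-degree profile.'' It calls \algoname{NonsingularPopov} on $\matcols{\pmat}{\tuple\piv}$, recovers $\umat$ by polynomial matrix division against a column-reduced matrix, and finally forms a single truncated product $\umat^{-1}\matcols{\pmat}{\{1,\ldots,\cdim\}\setminus\tuple\piv}\bmod\var^\degBd$. The cost is governed by the truncation precision $\degBd$, which is a \emph{strict upper bound on $\deg(\popov)$} supplied as input, not by the sum of pivot degrees per se. \cref{prop:algo:knownsupp_popov} already yields the bound $\softO{\rdim^{\expmatmul-1}\cdim\degBd}$ with its default $\degBd$, which is at most the $\degBd$ of \cref{thm:hermite} (removing the term $\deg(\pmat)+\amp$ from the $\min$ can only increase it); combined with \cref{eqn:bound_deg_popov_global} of \cref{lem:popov_bounds} to certify that this $\degBd$ indeed dominates $\deg(\hermite)$, this closes the argument. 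Your chain through $\sum_i\delta_i = \deg\det(\matcols{\pmat}{J})$ is true but is not the mechanism by which the cost bound is obtained, and the ``calibrated order vector'' you propose does not correspond to anything in the algorithm.
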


Using this quantity $\degBd$ (see \cref{eqn:degBd_hermite} for a more precise
definition), the mentioned cost for the kernel basis approach of
\cite{BeLaVi06} becomes $\softO{(\rdim+\cdim)^{\expmatmul-1} \cdim^2 \degBd}$.
Thus, when $\cdim\in\bigO{\rdim}$ the cost in the above theorem already gains a
factor $\cdim$ compared to this approach; when $\cdim$ is large compared to
$\rdim$, this factor becomes $\cdim (\frac{\cdim}{\rdim})^{\expmatmul-1}$.

\section{Preliminaries}
\label{sec:preliminaries}

\subsection{Basic notation}

If $\pmat$ is an $\rdim\times\cdim$ matrix and $1 \le j \le \cdim$, we denote
by $\matcol{\pmat}{j}$ the $j$th column of $\pmat$.  If $J \subseteq
\{1,\ldots,\cdim\}$ is a set of column indices, $\matcols{\pmat}{J}$ is the
submatrix of $\pmat$ formed by the columns at the indices in $J$.  We use
analogous row-wise notation.  Similarly, for a tuple $\shiftt \in \ZZ^\cdim$,
then $\subTuple{\shiftt}{J}$ is the subtuple of $\shiftt$ formed by the entries
at the indices in $J$.

When adding a constant to an integer tuple, for example $\shiftt + 1$ for some
$\shiftt = (t_1,\ldots,t_m) \in \ZZ^m$, we really mean $(t_1+1,\ldots,t_m+1)$;
when comparing a tuple to a constant, for example $\shiftt \le 1$, we mean
$\max(\shiftt) \le 1$.  Two tuples of the same length will always be compared
entrywise: $\shifts \leq \shiftt$ stands for $s_i \leq t_i$ for all $i$.  We
use the notation $\amp[\shiftt] = \max(\shiftt) - \min(\shiftt)$, and
$\sumVec{\shiftt} = t_1 + \ldots + t_m$ (note that the latter will mostly be
used when $\shiftt$ has nonnegative entries). 

For a given nonnegative integer tuple $\shiftt = (t_1,\ldots,t_m) \in \NN^m$,
we denote by $\shiftMat{\shiftt}$ the diagonal matrix with entries
$\var^{t_1},\ldots,\var^{t_m}$.

\subsection{Row spaces, kernels, and approximants}

For a matrix $\pmat \in \pmatSpace$, its \emph{row space} is the
$\polRing$-module generated by its rows, that is, $\{\rowgrk{\lambda} \pmat ,
\rowgrk{\lambda} \in \polMatSpace[1][\rdim]\}$.  Then, a matrix $\bmat \in
\polMatSpace[\matrk][\cdim]$ is a \emph{row basis} of $\pmat$ if its rows form
a basis of the row space of $\pmat$, in which case $\matrk$ is the rank of
$\pmat$.

The \emph{left kernel} of $\pmat$ is the $\polRing$-module $\{ \row{p} \in
\polMatSpace[1][\rdim] \mid \row{p} \pmat = \matz \}$.  A matrix $\kerbas \in
\polMatSpace[k][\rdim]$ is a \emph{left kernel basis of $\pmat$} if its rows
form a basis of this kernel, in which case $k=\rdim-\matrk$.  Similarly, a
\emph{right} kernel basis of $\pmat$ is a matrix $\kerbas \in
\polMatSpace[\cdim][(\cdim-\matrk)]$ whose \emph{columns} form a basis of the
right kernel of $\pmat$.

Given $\orders=(\order_1,\ldots,\order_\cdim)\in\ZZp^\cdim$, the set of
\emph{approximants for $\pmat$ at order $\orders$} is the $\polRing$-module of
rank $\rdim$ defined as
\[
  \modApp{\orders}{\pmat} =
  \{ \app \in \polMatSpace[1][\rdim] \mid \app \pmat = \matz \bmod \xDiag{\orders} \}.
\]
The identity $\app \pmat = \matz \bmod \xDiag{\orders}$ means that the $j$th
entry of the vector $\app \pmat \in \polMatSpace[1][\cdim]$ is divisible by
$\var^{\order_j}$, for all $j$.

Two $\rdim \times \cdim$ matrices $\pmat_1$, $\pmat_2$ have the same row space
if and only if they are \emph{unimodularly equivalent}, that is, there is a
unimodular matrix $\umat \in \polMatSpace$ such that $\umat \pmat_1 = \pmat_2$.
For $\pmat_3 \in \polMatSpace[\matrk][\cdim]$ with $r \leq \rdim$, $\pmat_1$
and $\pmat_3$ have the same row space exactly when $\pmat_3$ padded with
$\rdim-r$ zero rows is unimodularly equivalent to $\pmat_1$.

\subsection{Row degrees and reduced forms}
\label{ssec:preliminaries:rdeg_reduced}

For a matrix $\pmat \in \pmatSpace$, we denote by $\rdeg{\pmat}$ the tuple of
the degrees of its rows, that is,
$(\deg(\matrow{\pmat}{1}),\ldots,\deg(\matrow{\pmat}{\rdim}))$.

If $\pmat$ has no zero row, the \emph{(row-wise) leading matrix} of $\pmat$,
denoted by $\leadingMat[]{\pmat}$, is the matrix in $\matSpace[\rdim][\cdim]$
whose entry $i,j$ is equal to the coefficient of degree
$\deg(\matrow{\pmat}{i})$ of the entry $i,j$ of $\pmat$.

For a matrix $\reduced \in \pmatSpace$ with no zero row and $\rdim \leq \cdim$,
we say that $\reduced$ is \emph{(row) reduced} if $\leadingMat[]{\reduced}$ has
full rank.  Thus, here a reduced matrix must have full rank (and no zero row),
as in \cite{Forney75}.  For more details about reduced matrices, we refer the
reader to \cite{Wolovich74,Forney75,Kailath80,BeLaVi06}.  In particular, we
have the following characterizing properties:
\begin{itemize}
  \item \emph{Predictable degree property \cite{Forney75}
    \cite[Thm.~6.3-13]{Kailath80}:} we have
    \[
      \deg(\rowgrk{\lambda}\reduced) =
      \max \{ \deg(\lambda_i) + \rdeg{\matrow{\reduced}{i}} , \; 1 \le i \le \rdim \}
    \]
    for any vector $\rowgrk{\lambda} = [\lambda_i]_i \in
    \polMatSpace[1][\rdim]$.
  \item \emph{Minimality of the sum of row degrees \cite{Forney75}:}
    for any nonsingular matrix $\mat{U} \in \polMatSpace[\rdim][\rdim]$, we
    have $\sumVec{\rdeg{\mat{U} \reduced}} \ge \sumVec{\rdeg{\reduced}}$.
  \item \emph{Minimality of the tuple of row degrees \cite[Sec.\,2.7]{Zhou12}:} 
    for any nonsingular matrix $\mat{U} \in \polMatSpace[\rdim][\rdim]$, we
    have $\shifts \le \shifts[t]$ where the tuples $\shifts$ and $\shifts[t]$
    are the row degrees of $\reduced$ and of $\mat{U}\reduced$ sorted in
    nondecreasing order, respectively.
\end{itemize}

From the last item, it follows that two unimodularly equivalent reduced
matrices have the same row degree up to permutation.

For a matrix $\pmat \in \pmatSpace$, we call \emph{reduced form of $\pmat$} any
reduced matrix $\reduced \in \polMatSpace[\matrk][\cdim]$ which is a row basis
of $\pmat$.  The third item above shows that $\deg(\reduced) \le \deg(\pmat)$.

\subsection{Pivots and Popov forms}
\label{ssec:preliminaries:pivots_popov}

For a nonzero vector $\row p = [p_j]_j \in \polMatSpace[1][\rdim]$, the \emph{pivot index} of $\row p$ is the largest index $j$ such that $\deg(p_j) = \deg(\row p)$ \cite[Sec.\,6.7.2]{Kailath80}.
In this case we call $p_j$ the \emph{pivot entry} of $\row p$.
For the zero vector, we define its degree to be $-\infty$ and its pivot index to be $0$.
Further, the \emph{pivot index} of a matrix $\pmat \in \pmatSpace$ is the tuple $(j_1,\ldots,j_m) \in \ZZ_{\geq 0}^m$ such that $j_i$ is the pivot index of $\matrows{\pmat}{i}$.
Note that we will only use the word ``pivot'' in this row-wise sense.

A matrix $\popov \in \pmatSpace$ is in \emph{weak Popov form} if it has no zero row and the entries of the pivot index of $\popov$ are all distinct \cite{MulSto03};
a weak Popov form is further called \emph{ordered} if its pivot index is in (strictly) increasing order.
A weak Popov matrix is also reduced.

The (ordered) weak Popov form is not canonical: a given row space may have many (ordered) weak Popov forms.
The Popov form adds a normalization property, yielding a canonical form;
we use the definition from \cite[Def.\,3.3]{BeLaVi99}:

A matrix $\mat{P} \in \pmatSpace$ is in \emph{Popov form} if it is in ordered weak Popov form, the corresponding pivot entries are monic, and in each column of $\mat{P}$ which contains a (row-wise) pivot the
other entries have degree less than this pivot entry.

For a matrix $\pmat \in \pmatSpace$ of rank $\matrk$, there exists
a unique $\popov \in \polMatSpace[\matrk][\cdim]$ which is in Popov form and has the same row space as $\pmat$ \cite[Thm.\,2.7]{BeLaVi06}.
We call $\popov$ \emph{the Popov form of $\pmat$}.
For a more detailed treatment of Popov forms, see \cite{Kailath80,BeLaVi99,BeLaVi06}.

For example, consider the unimodularly equivalent matrices
\[  
  \begin{bmatrix}
    \var^2 & \var+1 & 2 \\
    2\var+2 & 2\var & 2
  \end{bmatrix}
  \quad\text{and}\quad
  \begin{bmatrix}
    \var^2-\var-1 & 1 & 1 \\
    \var+1 & \var & 1
  \end{bmatrix},
\]
defined over $\mathbb{F}_7[\var]$; the first one is in weak Popov
form and the second one is its Popov form.  Note that any
deterministic rule for ordering the rows would lead to a canonical form; we use
that of \cite{BeLaVi99,BeLaVi06}, while that of \cite{Kailath80,MulSto03} sorts
the rows by degrees and would consider the second matrix not to be normalized.

Going back to the general case, we denote by $\pivSupp[][\pmat]\in
\ZZp^{\matrk}$ the pivot index of the Popov form of $\pmat$, called the
\emph{pivot support} of $\pmat$.  In most cases, $\pivSupp[][\pmat]$ differs
from the pivot index of $\pmat$.  We have the following important properties:
\begin{itemize}
  \item The pivot index of $\pmat$ is equal to the pivot support
    $\pivSupp[][\pmat]$ if and only if $\pmat$ is in ordered weak Popov form.
  \item For any $\rowgrk{\lambda} \in \polMatSpace[1][\rdim]$ such that
    $\rowgrk{\lambda}\pmat \neq \matz$, the pivot index of
    $\rowgrk{\lambda}\pmat$ appears in the pivot support $\pivSupp[][\pmat]$;
    in particular each nonzero entry of the pivot index of $\pmat$ is in
    $\pivSupp[][\pmat]$.
\end{itemize}

For the first item, we refer to \cite[Sec.\,2]{BeLaVi06} (in this reference,
the set formed by the entries of the pivot support is called ``pivot set'' and
ordered weak Popov forms are called quasi-Popov forms).  The second item is a
simple extension of the predictable degree property (see for example
\cite[Lem.\,1.17]{Neiger16b} for a proof).

\subsection{Computational tools}

We will rely on the following result from \cite[Cor.\,4.6 and
Thm.\,3.4]{ZhLaSt12} about the computation of kernel bases in reduced form.
Note that a matrix is \emph{column reduced} if its transpose is reduced.

\begin{theorem}[\cite{ZhLaSt12}]
  \label{thm:kernel_ZhLaSt}
  There is an algorithm \algoname{MinimalKernelBasis} which, given a matrix
  $\pmat \in \polMatSpace[\rdim][\cdim]$ with $\rdim\le\cdim$, returns a right
  kernel basis $\kerbas \in \polMatSpace[\rdim][(\cdim-\matrk)]$ of $\pmat$ in
  column reduced form using
  \[
    \softO{\cdim^\expmatmul \lceil \rdim \deg(\pmat) / \cdim \rceil}
    \subseteq \softO{\cdim^\expmatmul  \deg(\pmat) }
  \]
  operations in $\field$.  Furthermore, $\sumVec{\cdeg{\kerbas}} \le \matrk
  \deg(\pmat)$.
\end{theorem}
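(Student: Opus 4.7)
The theorem is a result from \cite{ZhLaSt12}; I sketch how I would re-derive it. The plan is to reduce the kernel basis computation to a call to a fast minimal approximant basis algorithm, combined with a partial linearization step to rebalance degrees.

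I would first establish the degree bound $\sumVec{\cdeg{\kerbas}} \le \matrk \deg(\pmat)$ on any column reduced right kernel basis of $\pmat$. The column-wise analogue of the minimality property stated in \cref{ssec:preliminaries:rdeg_reduced} shows that all such bases share the same multiset of column degrees, so it suffices to exhibit one that meets the bound. Factoring $\pmat = \colbas \rfac$ with $\colbas$ a column basis having $\matrk$ columns and $\rfac \in \polMatSpace[\matrk][\cdim]$ of full row rank, the right kernels of $\pmat$ and $\rfac$ coincide. For $\rfac$, the sum of column degrees of a column reduced right kernel basis equals $\deg(\det(\matcols{\rfac}{I}))$ for a suitable choice of $\matrk$ columns $I$, and a Cauchy--Binet argument applied to $\pmat$ bounds this by $\matrk \deg(\pmat)$.

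Next I would reduce to approximants. Setting $N = (\matrk+1) \deg(\pmat) + 1$, every column $\col{v}$ of a column reduced right kernel basis has $\deg(\pmat\col{v}) < N$, hence $\pmat\col{v} \equiv \matz \pmod{\var^{N}}$, so $\col{v}$ lies in the column-wise approximant module obtained from $\pmat$ at uniform order $N$. A shifted minimal basis of this module, computed with a shift encoding the row degrees of $\pmat$, contains a column reduced right kernel basis of $\pmat$ as a distinguished submatrix, identifiable by a threshold on the shifted column degrees separating genuine kernel elements from spurious approximants.

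The main obstacle is cost: invoking a fast approximant basis algorithm at order $N$ directly would cost $\softO{\cdim^\expmatmul \matrk \deg(\pmat)}$, which exceeds the target when $\rdim$ is much smaller than $\cdim$. To circumvent this, partial linearization is applied to $\pmat$: it is replaced by an equivalent matrix with balanced row degrees close to $\rdim \deg(\pmat)/\cdim$ and a moderately larger number of rows, so that the effective approximant order drops to $\bigO{\lceil \rdim \deg(\pmat)/\cdim \rceil}$. Running the approximant basis algorithm on the linearized system, then contracting its output back, yields a column reduced right kernel basis of the original $\pmat$ within the claimed cost $\softO{\cdim^\expmatmul \lceil \rdim \deg(\pmat)/\cdim \rceil}$. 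The delicate points are verifying that the contraction indeed produces a kernel basis of $\pmat$ (not just of its linearization) and that column reducedness is preserved; these are handled carefully in \cite{ZhLaSt12}.
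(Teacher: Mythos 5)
This theorem is quoted from \cite{ZhLaSt12}; the paper gives no proof of it, so your sketch has to be judged against that reference. The first two parts of your outline are on the right track: all column reduced right kernel bases share the same multiset of column degrees, the Cauchy--Binet factorization of $\pmat$ through a column basis (so that a maximal minor of the saturation factor divides a maximal minor of $\pmat$) does yield $\sumVec{\cdeg{\kerbas}} \le \matrk\deg(\pmat)$, and a minimal kernel basis does sit inside an approximant module at sufficiently high order.

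The cost argument, however, has a genuine gap. Partial linearization of $\pmat$ rebalances its \emph{row} degrees, but it does not and cannot lower the approximant \emph{order} to $\bigO{\lceil \rdim\deg(\pmat)/\cdim\rceil}$: the order must remain large enough to certify that the returned columns are in the true kernel, and the columns of a minimal kernel basis can individually have degree as large as $\matrk\deg(\pmat)$ (your bound controls the sum, not the maximum). Moreover, $\rdim\deg(\pmat)/\cdim \le \deg(\pmat)$, so partial linearization would not lower uniform row degrees of $\pmat$ in the first place. A single approximant basis call at order $\Theta(\matrk\deg(\pmat))$ on a module of dimension $\cdim$ with $\rdim$ constraints costs $\softO{\cdim^{\expmatmul-1}\rdim\matrk\deg(\pmat)}$ with \cite{JeNeScVi16}, which is larger than the target $\softO{\cdim^{\expmatmul}\lceil \rdim\deg(\pmat)/\cdim\rceil}$ by roughly a factor of $\matrk$. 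The ingredient your sketch is missing, and the actual crux of \cite{ZhLaSt12}, is a divide-and-conquer recursion on the rows (constraints) of $\pmat$: compute an order basis at moderate order to peel off the low-degree kernel columns, split the remaining rows of $\pmat$ into two blocks, recursively compute a kernel basis $\kerbas_1$ of the first block, multiply the second block by $\kerbas_1$, recurse on that product, and return the product of the intermediate bases. The degree bound $\sumVec{\cdeg{\kerbas}} \le \matrk\deg(\pmat)$ is precisely what keeps the intermediate multiplications and order-basis calls within budget at every level of this recursion. Without it, the stated cost is not attainable, so your proof as written does not establish the theorem.
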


For the computation of normal forms of square, nonsingular matrices, we use the
next result ($\shifts$-Popov forms will be introduced in
\cref{sec:preliminaries_shifted}; Popov forms as above correspond to $\shifts =
\unishift$).
\begin{theorem}[\cite{NeiVu17}]
  \label{thm:nonsing_popov_NeiVu}
  There is an algorithm \algoname{NonsingularPopov} which, given a nonsingular
  matrix $\pmat \in \polMatSpace[\rdim]$ and a shift $\shifts \in \ZZ^\rdim$,
  returns the $\shifts$-Popov form of $\pmat$ using
  \[
    \softO{\rdim^{\expmatmul} \lceil\sumVec{\rdeg{\pmat}}/\rdim\rceil }
    \subseteq \softO{\rdim^\expmatmul \deg(\pmat)}
  \]
  operations in $\field$.
\end{theorem}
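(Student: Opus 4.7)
The plan follows the two-stage strategy of Neiger and Vu. First, one determines the tuple $\minDegs \in \NN^\rdim$ of $\shifts$-pivot degrees of the sought $\shifts$-Popov form $\popov$ of $\pmat$, together with its pivot support; then one recovers $\popov$ itself via a single shifted approximant basis computation. Two preliminary reductions simplify the analysis: adding a constant to $\shifts$ does not change the $\shifts$-Popov form of $\pmat$, so one may take $\shifts \in \NN^\rdim$; and a column partial linearization reduces to the case where $\pmat$ has balanced row degrees, each bounded by $\lceil \sumVec{\rdeg{\pmat}}/\rdim\rceil$, at the cost of only a constant blow-up of the column dimension.

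For the first stage, any $\shifts$-reduced form $\reduced$ of $\pmat$ suffices: the minimality-of-row-degrees property recalled in \cref{ssec:preliminaries:rdeg_reduced}, extended to the shifted setting, implies that the multiset of $\shifts$-row degrees of $\reduced$ equals that of $\popov$, while the $\shifts$-pivot index of $\reduced$ agrees with the pivot support of $\pmat$ up to permutation. Subtracting the corresponding $\shifts$-entries then yields $\minDegs$. Such a $\shifts$-reduced form is computed deterministically within the claimed cost by applying the algorithm of Gupta et al.~\cite{GuSaStVa12} to an auxiliary matrix in which the unbalanced shift $\shifts$ has been absorbed into the row degrees via partial linearization. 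For the second stage, the $\shifts$-Popov form $\popov = \umat\pmat$ is characterized as the unique element of the row space of $\pmat$ whose $\shifts$-row degrees are $\shifts + \minDegs$ (up to sorting) and whose pivot entries are normalized. This translates into an approximant basis problem over the block matrix $\trsp{[\trsp{\pmat} \; -\idMat]}$ at a suitably chosen order $\orders$ and for the extended shift $(\shifts,-\minDegs)$; the rows of minimal extended-shifted degree recover the pairs $(\matrow{\umat}{i},\matrow{\popov}{i})$. Using the approximant basis algorithm of Jeannerod--Neiger--Schost--Villard~\cite{JeNeScVi16} together with the bound $\sumVec{\minDegs} = \deg(\det(\pmat)) \le \sumVec{\rdeg{\pmat}}$, a suitable choice of $\orders$ yields the target complexity, again after a partial linearization absorbing $-\minDegs$.

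The main obstacle is controlling the cost when $\shifts$ is highly unbalanced, with $\amp[\shifts]$ possibly far exceeding $\deg(\pmat)$: a direct implementation would pay a cost proportional to $\amp[\shifts]$. The remedy, used consistently in both stages, is the shift-linearization technique of \cite{JeNeScVi16}, which simulates an unbalanced shift by a balanced one in a larger workspace and recovers the desired form through a cheap inverse transformation. Verifying that this linearization meshes correctly with both the $\shifts$-reduction step and the approximant basis step---and produces a matrix that is exactly in $\shifts$-Popov form rather than merely $\shifts$-reduced---is the delicate technical point of the proof.
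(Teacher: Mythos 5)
The paper offers no proof of this statement: it is imported wholesale from \cite[Thm.~1.3]{NeiVu17}, the only adjustment being that the generic determinant bound is replaced by the larger quantity $\lceil\sumVec{\rdeg{\pmat}}/\rdim\rceil$. Measured against the proof in \cite{NeiVu17}, your sketch does capture the right skeleton: first determine the $\shifts$-pivot degrees $\minDegs$ (for a square nonsingular input the pivot support is simply $(1,\ldots,\rdim)$, so only the degrees are at stake), then recover the $\shifts$-Popov form from that knowledge, with the cost governed by $\sumVec{\minDegs}=\deg(\det(\pmat))\le\sumVec{\rdeg{\pmat}}$ and by partial linearization.

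However, your first stage has a genuine gap. You propose to read the pivot degrees off an $\shifts$-reduced form of $\pmat$, computed by ``absorbing'' the arbitrary shift into the matrix and invoking \cite{GuSaStVa12}, the imbalance being tamed by a ``shift-linearization'' attributed to \cite{JeNeScVi16}. No such generic device exists: absorbing $\shifts$ means working with $\pmat\shiftMat{\shifts}$, whose degree grows with $\amp$, and $\amp$ is not controlled by $\deg(\pmat)$ or $\sumVec{\rdeg{\pmat}}$ (for Hermite-type shifts it is of order $\rdim\deg(\pmat)$ or worse). The partial linearizations of \cite{GuSaStVa12} and \cite{JeNeScVi16} (and of Zhou--Labahn) only balance a degree or shift profile whose \emph{sum} is already bounded by the target cost parameter---which is exactly why they apply in your second stage to the shift $-\minDegs$, since $\sumVec{\minDegs}\le\deg(\det(\pmat))$, but not to an arbitrary input shift. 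If an arbitrary shift could be simulated by a balanced one in a workspace of dimension $\bigO{\rdim}$, the arbitrary-shift problem would reduce outright to the uniform-shift case, which is precisely what \cite{Neiger16,NeiVu17} had to circumvent by other means (a recursive computation of the $\shifts$-minimal degrees that keeps all intermediate bases in shifted Popov---not merely reduced---form so that their size stays controlled independently of the shift). A secondary flaw in the same step: a matrix that is only $\shifts$-reduced reveals the multiset of $\shifts$-row degrees, but not the column-indexed pivot degrees $\minDegs$ needed for your second stage, since the $\shifts$-pivot indices of a reduced matrix may repeat; you would need an (ordered) weak Popov form, and obtaining one cheaply for arbitrary $\shifts$ is again essentially the problem being solved.
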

\noindent This is \cite[Thm.\,1.3]{NeiVu17} with a minor modification: we have
replaced the so-called generic determinant bound by a larger quantity (the sum
of row degrees), since this is sufficient for our needs here.

\section{Popov form via completion into a square and nonsingular matrix}
\label{sec:fast_lasvegas}

We now present a new Las Vegas algorithm for computing the (non-shifted) Popov
form $\popov$ of a rectangular matrix $\pmat \in \pmatSpace$ with full rank and
$\rdim<\cdim$, relying on algorithms for the case of square, nonsingular
matrices.  In the case $\cdim \in O(\rdim)$, this results in a cost bounded by
$\softO{\rdim^\expmatmul \deg(\pmat)}$, which has already been obtained by the
Las Vegas algorithm of Sarkar \cite{Sarkar11}; however, the advantage of our
approach is that it becomes asymptotically faster if the \emph{average} row
degree of $\pmat$ is significantly smaller than $\deg(\pmat)$.

The idea is to find a matrix $\cmat \in \polMatSpace[(\cdim-\rdim)][\cdim]$ such that
the Popov form of $\trsp{[\trsp{\pmat} \;\; \trsp{\cmat}]}$ contains $\popov$ as an identifiable subset of its rows.
We will show that if $\cmat$ is drawn randomly of sufficiently high degree, then this is true with high probability.
\begin{definition}
  \label{dfn:completion}
  Let $\pmat \in \pmatSpace$ have full rank with $\rdim<\cdim$ and let $\popov \in \pmatSpace$ be the Popov form of $\pmat$.
  A \emph{completion of $\pmat$} is any matrix $\cmat \in \polMatSpace[(\cdim-\rdim)][\cdim]$ such that:
  \[
    \min(\rdeg{\cmat}) > \deg(\popov)
    \text{ and }
    \begin{bmatrix} \popov \\ \cmat \end{bmatrix} \text{ is row reduced}.
  \]
\end{definition}

The next lemma shows that: 1) if $\cmat$ is a completion, then $\popov$ will appear as a submatrix of the Popov form of $\trsp{[ \trsp\pmat \;\; \trsp \cmat]}$; and 2) we can easily check from that Popov form whether $\cmat$ is a completion or not.
The latter is essential for a Las Vegas algorithm.

\begin{lemma}
  \label{lem:completion_correct_verifiable}
  Let $\pmat \in \pmatSpace$ have full rank with $\rdim<\cdim$ with Popov form $\popov$, and let
  $\cmat \in \polMatSpace[(\cdim-\rdim)][\cdim]$ be such that $\trsp{[ \trsp\pmat \;\; \trsp \cmat]}$ has full rank and $\min(\rdeg{\cmat}) > \deg(\popov)$.
  Then, $\cmat$ is a completion of $\pmat$ if and only if $\rdeg{\matt P}$ contains a permutation of $\rdeg{\cmat}$, where $\matt{P}$ is the Popov form of $\trsp{[\trsp{\pmat} \;\; \trsp{\cmat}]}$.
  In this case, $\popov$ is the submatrix of $\matt{P}$ formed by its rows of degree less
  than $\min(\rdeg{\cmat})$.
\end{lemma}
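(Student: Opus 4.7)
I would work with the square $\cdim \times \cdim$ matrix $M = \trsp{[\trsp{\popov} \;\; \trsp{\cmat}]}$. Since $\popov$ and $\pmat$ share their row space, $M$ has the same row space as $\trsp{[\trsp{\pmat} \;\; \trsp{\cmat}]}$, so $M$ is nonsingular and $\matt{P}$ is also the Popov form of $M$. Writing $M = \mat{U}^{-1} \matt{P}$ for a unimodular $\mat{U}$, the proof hinges on a feature of square nonsingular Popov forms: their pivot index is forced to be $(1, 2, \ldots, \cdim)$, so row $i$ of $\matt{P}$ has its pivot in column $i$.

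Set $S = \{i : \rdeg{\matrow{\matt{P}}{i}} \leq \deg(\popov)\}$. The central technical step is to prove that whenever $|S| = \rdim$, one has $\popov = \matrows{\matt{P}}{S}$. To establish this, I would write each row $\matrow{\popov}{j} = \rowgrk{\mu}_j \matt{P}$ where $\rowgrk{\mu}_j$ is a row of $\mat{U}^{-1}$, and invoke predictable degree on the reduced matrix $\matt{P}$: the bound $\deg(\matrow{\popov}{j}) \leq \deg(\popov) < \rdeg{\matrow{\matt{P}}{i}}$ for $i \in S^c$ forces $(\mat{U}^{-1})_{j,i} = 0$ for all $j \leq \rdim$ and $i \in S^c$. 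After permuting columns to place those in $S$ first, $\mat{U}^{-1}$ becomes block lower-triangular with an $\rdim \times \rdim$ top-left block $V$; unimodularity of $\mat{U}^{-1}$ then forces $V$ itself to be unimodular. Since $\matrows{\matt{P}}{S}$ inherits the Popov form from $\matt{P}$ (the pivot of row $i \in S$ at column $i$ remains a monic pivot, with off-pivot entries of smaller degree), and $\popov = V \matrows{\matt{P}}{S}$ has the same row space, uniqueness of the Popov form yields $\popov = \matrows{\matt{P}}{S}$. I expect this block argument---particularly deducing unimodularity of $V$ from the predictable degree vanishing pattern---to be the main obstacle.

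The two directions of the iff now fall out. Forward: if $M$ is row reduced, then $M$ and $\matt{P}$ are two reduced bases of the same row space, hence have the same sorted row degrees; under $\min(\rdeg{\cmat}) > \deg(\popov)$, sorted $\rdeg{M}$ splits cleanly into $\rdeg{\popov}$ (low values) and $\rdeg{\cmat}$ (high values), and thus so does sorted $\rdeg{\matt{P}}$, proving the containment. Reverse: the inequality sorted $\rdeg{\matt{P}}_i \leq$ sorted $\rdeg{M}_i$ (same minimality bullet) pins down the bottom $\rdim$ values of sorted $\rdeg{\matt{P}}$ as $\leq \deg(\popov)$, while the assumed permutation of $\rdeg{\cmat}$ must account for the remaining $\cdim-\rdim$ entries; this gives $|S| = \rdim$ and $\sumVec{\rdeg{\matrows{\matt{P}}{S^c}}} = \sumVec{\rdeg{\cmat}}$. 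The central claim then yields $\popov = \matrows{\matt{P}}{S}$, and the determinant-versus-row-degree computation
\[
  \sumVec{\rdeg{M}} = \sumVec{\rdeg{\popov}} + \sumVec{\rdeg{\cmat}} = \sumVec{\rdeg{\matt{P}}} = \deg(\det(\matt{P})) = \deg(\det(M))
\]
shows $M$ is row reduced, by the standard characterization for square nonsingular matrices. The final assertion is immediate from the central claim together with the observation that, under either (equivalent) condition, $\rdeg{\matt{P}}$ takes no values strictly between $\deg(\popov)$ and $\min(\rdeg{\cmat})$, so $S$ coincides with the set of rows of $\matt{P}$ of degree less than $\min(\rdeg{\cmat})$.
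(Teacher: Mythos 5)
Your proof is correct and follows essentially the same route as the paper's: both hinge on observing that $\matt{P}$ is also the Popov form of $[\trsp\popov \;\; \trsp\cmat]^{\mathsf T}$, applying the predictable degree property to force a zero block in the unimodular transform between them, deducing unimodularity of the surviving $\rdim\times\rdim$ block, and invoking uniqueness of the Popov form to identify $\popov$ among the low-degree rows of $\matt{P}$. The paper partitions $\matt{P}$ directly by the threshold $\min(\rdeg{\cmat})$ and argues the row counts via the hypothesis and the rank of $\popov$, whereas you partition by $\deg(\popov)$ and pin down $|S|=\rdim$ via the sorted-row-degree minimality of $\matt{P}$, then reconcile the two thresholds at the end; you also make explicit the determinant-vs-row-degree-sum argument for reducedness of $[\trsp\popov\;\;\trsp\cmat]^{\mathsf T}$, which the paper leaves implicit. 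These are presentational rather than substantive differences.
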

\begin{proof}
  First, we assume that $\cmat$ is a completion of $\pmat$.  Then
  $\trsp{[\trsp{\popov} \;\; \trsp{\cmat}]}$ is reduced, and therefore it has
  the same row degree as its Popov form $\matt{P}$ up to permutation.  Hence,
  in particular, $\rdeg{\matt{P}}$ contains a permutation of $\rdeg{\cmat}$. 

  Now, we assume that $\rdeg{\matt{P}}$ contains a permutation of $\rdeg{\cmat}$ and our goal is to
  show that $\trsp{[\trsp{\popov} \;\; \trsp{\cmat}]}$ is reduced and $\matt P$ contains $\popov$ as a submatrix.
  Let $\matt{P}_1$ be the submatrix of $\matt{P}$ of its rows of degree less
  than $\min(\rdeg{\cmat})$; and $\matt{P}_2$ be the submatrix of the remaining rows.
  By assumption, $\matt{P}_2$ has at least $\cdim-\rdim$ rows and $\matt P_1$ has at most $\rdim$ rows.
  Since $\matt{P}$ is also the Popov form of $\trsp{[\trsp{\popov} \;\;
  \trsp{\cmat}]}$, there is a unimodular transformation
  \begin{equation}
    \label{eqn:transfo_completion}
    \left[\begin{array}{cc}
      \umat_{11} & \umat_{12} \\
      \umat_{21} & \umat_{22} \\
    \end{array}\right]
    \left[\begin{array}{c}
      \matt{P}_1 \\
      \matt{P}_2
    \end{array}\right]
    =
    \left[\begin{array}{c}
      \popov \\
      \cmat
    \end{array}\right].
  \end{equation}
  By the predictable degree property we obtain $\umat_{12} = \matz$; thus, since $\popov$ has full rank $\rdim$, then $\matt P_1$ has exactly $\rdim$ rows, and $\umat_{11}$ is unimodular.
  Therefore $\matt P_1 = \popov$ since both matrices are in Popov form.
  As a result, $\rdeg{\matt P}$ is a permutation of $(\rdeg{\popov}, \rdeg{\cmat})$.
\end{proof}

\begin{lemma}
  \label{lem:completion_probability}
  Let $\pmat \in \pmatSpace$ have full rank with $\rdim<\cdim$.
  Let $S \subseteq \field$ be finite of cardinality $q$
  and let $\mat{L} \in \matSpace[(\cdim-\rdim)][\cdim]$ with entries chosen
  independently and uniformly at random from $S$.  Then $\var^{\deg(\pmat)+1}\mat{L}$ is a completion of
  $\pmat$ with probability at least $\prod_{i=1}^{\cdim-\rdim}(1-q^{-i})$ if
  $\field$ is finite and $S=\field$, and at least $1-\frac{\cdim-\rdim}{q}$
  otherwise.
\end{lemma}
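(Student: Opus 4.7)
The plan is to reduce the completion property to the invertibility of a single scalar matrix, and then estimate the probability that this matrix is nonsingular in each of the two regimes.

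First I would prove the key equivalence: writing $\cmat = \var^{\deg(\pmat)+1}\mat{L}$, it is a completion of $\pmat$ if and only if the $\cdim\times\cdim$ scalar matrix $\mat{E} = \trsp{[\trsp{\leadingMat[]{\popov}}\;\;\trsp{\mat{L}}]}$ is nonsingular. Since $\popov$ is a reduced form of $\pmat$ we have $\deg(\popov)\le\deg(\pmat)$, so if every row of $\mat{L}$ is nonzero then each row of $\cmat$ has degree exactly $\deg(\pmat)+1 > \deg(\popov)$, the degree requirement of \cref{dfn:completion} is automatically satisfied, and the leading matrix of $\trsp{[\trsp\popov\;\;\trsp\cmat]}$ with respect to its row degrees $(\rdeg{\popov},\deg(\pmat)+1,\ldots,\deg(\pmat)+1)$ is precisely $\mat{E}$, so reducedness of the combined matrix is equivalent to $\mat{E}$ being nonsingular. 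If instead some row of $\mat{L}$ is zero, then the combined matrix has a zero row (hence is not reduced) and $\mat{E}$ is singular, so both sides of the equivalence fail.

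Second, I would estimate the probability that $\mat{E}$ is nonsingular under random choice of $\mat{L}$. When $S=\field$ with $q = |\field|$, I would count row by row: $\leadingMat[]{\popov}$ has rank $\rdim$ (as $\popov$ is a reduced form of rank $\rdim$), so row $i$ of $\mat{L}$ must lie outside an $(\rdim+i-1)$-dimensional subspace, giving $q^\cdim - q^{\rdim+i-1}$ valid choices. Dividing $\prod_{i=1}^{\cdim-\rdim}(q^\cdim - q^{\rdim+i-1})$ by the total $q^{\cdim(\cdim-\rdim)}$ and reindexing yields the claimed product $\prod_{j=1}^{\cdim-\rdim}(1-q^{-j})$. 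In the general case I would instead apply the Schwartz--Zippel lemma to $\det(\mat{E})$, viewed as a polynomial in the $\cdim(\cdim-\rdim)$ entries of $\mat{L}$: each Leibniz term uses exactly one entry from each row of $\mat{L}$, so the total degree is at most $\cdim-\rdim$. To see this polynomial is not identically zero, pick any set $J$ of $\rdim$ columns on which $\leadingMat[]{\popov}$ has a nonsingular $\rdim\times\rdim$ submatrix and take the rows of $\mat{L}$ to be the standard basis vectors indexed by the complement of $J$; after a column permutation, $\mat{E}$ is block upper triangular with nonsingular diagonal blocks. Schwartz--Zippel then bounds the failure probability by $(\cdim-\rdim)/q$.

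The one delicate point is the first step: making the equivalence go through requires correctly handling zero rows of $\mat{L}$ and carefully identifying the leading matrix of the block system. Once the equivalence is in place, the two probability estimates are essentially standard; in particular, the finite-field bound must be an exact count rather than a Schwartz--Zippel bound in order to obtain the sharper product form stated in the lemma.
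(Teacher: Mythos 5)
Your proof is correct and follows essentially the same route as the paper: reduce the completion property to nonsingularity of the stacked scalar matrix $\bigl[\begin{smallmatrix}\leadingMat[]{\popov}\\\mat{L}\end{smallmatrix}\bigr]$, then count exactly over a finite field and use Schwartz--Zippel otherwise. You are in fact a bit more thorough than the paper, which only states the sufficient direction of the equivalence and omits the (easy) check that $\det(\mat{E})$ is not the zero polynomial; both additions are welcome but not logically necessary for the stated bound.
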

\begin{proof}
  Let $d = \deg(\pmat)$. 
  We first note that for $\var^{d+1}\mat{L}$ to be a completion of $\pmat$, it
  is enough that the matrix
  \[
    \leadingMatBig[]{\begin{bmatrix} \popov \\ \cmat \end{bmatrix}}
    = \begin{bmatrix} \leadingMat{\popov} \\ \leadingMat{\cmat} \end{bmatrix}
    = \begin{bmatrix} \leadingMat{\popov} \\ \mat{L} \end{bmatrix}
    \in \matSpace[\cdim]
  \]
  be invertible. Indeed, this implies first that $\trsp{[\trsp{\popov} \;\;
  \trsp{\cmat}]}$ is reduced; and second, that $\cmat$ has no zero row, hence
  $\rdeg{\cmat} = (d+1,\ldots,d+1)$ and $\min(\rdeg{\cmat})=d+1 >
  \deg(\pmat) \ge \deg(\popov)$.

  In the case of a finite field $\field$ with $q$ elements, the probability
  that the above matrix is invertible is $\prod_{i=1}^{\cdim-\rdim}(1-q^{-i})$.
  If $\field$ is infinite or of cardinality $\ge q$, the Schwartz-Zippel lemma
  implies that the probability that the above matrix is singular is at most
  $(\cdim-\rdim) / q$.
\end{proof}

Thus, if $\field$ is infinite, it is sufficient to take $S$ of cardinality at
least $2 (\cdim-\rdim)$ to ensure that $\var^{d+1}\mat{L}$ is a completion with
probability at least $1/2$.  On the other hand, if $\field$ is finite of
cardinality $q$, we have the following bounds on the probability:
\[
  \prod_{i=1}^{\cdim-\rdim}(1-q^{-i})
  >
  \left\{ \begin{array}{ll}
    0.28 & \text{if } q = 2, \\
    0.55 & \text{if } q = 3, \\
    0.75 & \text{if } q > 5.
  \end{array} \right.
\]

In \cref{algo:lasvegas_popov}, we first test the nonsingularity of
$\mat N = \trsp{[\trsp{\pmat} \;\; \trsp{\cmat}]}$ before computing $\matt{P}$, since
the fastest known Popov form algorithms in the square case do not support singular matrices.
Over a field with at least $2n\deg(\mat N) +1$ elements, a simple Monte Carlo test for this is to
evaluate the polynomial matrix at a random $\alpha \in \field$ and testing the resulting scalar
matrix for nonsingularity; this falsely reports singularity only if $\det(\mat N)$ is divisible by $(\var - \alpha)$.
Alternatively, a deterministic check is as follows.  First, apply the partial linearization of
\cite[Sec.\,6]{GuSaStVa12}, yielding a matrix $\expand{\mat{N}} \in \polMatSpace[\expand{\cdim}]$
such that $\expand{\mat{N}}$ is nonsingular if and only if $\mat{N}$ is nonsingular;
$\expand{\cdim} \in \bigO{\cdim}$; and
$\deg(\expand{\mat{N}}) \le \lceil \sumVec{\rdeg{\mat{N}}}/\cdim \rceil$.
This does not involve arithmetic operations.
Since $\expand{\mat{N}}$ is nonsingular if and only if its kernel is trivial, it then remains to compute a kernel basis via the algorithm in \cite{ZhoLab12}, using
$\softO{\cdim^\expmatmul \deg(\expand{\mat{N}})} \subseteq \softO{\cdim^\expmatmul
\lceil\sumVec{\rdeg{\mat{N}}}/\cdim\rceil}$ operations in $\field$.
Instead of considering the kernel, one could also test the nonsingularity of
$\expand{\mat{N}}$ using algorithms from \cite{GuSaStVa12}, as explained in
\cite[p.\,24]{Sarkar11}.

\begin{algobox}
  \algoInfo{RandomCompletionPopov}
  \label{algo:lasvegas_popov}

  \dataInfo{Input}{
    matrix $\pmat \in \pmatSpace$ with full rank and $\rdim<\cdim$;
    subset $S \subseteq \field$ of cardinality $q$.
  }

  \dataInfo{Output}{
    the Popov form of $\pmat$, or \algoword{failure}.
  }

  \algoSteps{
    \item $\mat{L} \assign$ matrix in $\matSpace[(\cdim-\rdim)][\cdim]$
      with entries chosen uniformly and independently at random from $S$.
    \item $\cmat \assign \var^{\deg(\pmat)+1} \mat{L}$
    \item \algoword{If} $\trsp{[\trsp{\pmat} \;\; \trsp{\cmat}]}$ is singular \algoword{then} \algoword{return failure}
      \label{step:RCP:singular}
    \item $\matt{\popov} \assign \algoname{NonsingularPopov}(\trsp{[\trsp{\pmat} \;\; \trsp{\cmat}]})$
      \label{step:RCP:nonsingularpopov}
    \item \algoword{If} $\rdeg{\matt{\popov}}$ does not contain a
      permutation of $\rdeg{\cmat}$ \algoword{then return failure}
      \label{step:RCP:baddegs}
    \item \algoword{Return} the submatrix of $\matt{\popov}$ formed by its rows
      of degree less than $\min(\rdeg{\cmat})$
  }
\end{algobox}

\begin{proposition}
  \label{prop:algo:lasvegas_popov}
  \cref{algo:lasvegas_popov} is correct and the probability that a failure is
  reported at \cref{step:RCP:singular} or \cref{step:RCP:baddegs} is as indicated in
  \cref{lem:completion_probability}.  If \algoname{NonsingularPopov} is the
  algorithm of \cite{NeiVu17}, \cref{algo:lasvegas_popov} uses
  \[
    \softO{\cdim^\expmatmul \left\lceil \frac{\sumVec{\rdeg{\pmat}} + (\cdim-\rdim) \deg(\pmat)}{\cdim} \right\rceil}
    \;\subseteq\;
    \softO{\cdim^\expmatmul \deg(\pmat)}
  \]
  operations in $\field$.
\end{proposition}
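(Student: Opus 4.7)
The plan is to verify in turn the three claims: correctness, the failure-probability bound, and the operation count.

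For correctness, I would assume the algorithm returns a matrix rather than \algoword{failure} and check that the three hypotheses of \cref{lem:completion_correct_verifiable} are satisfied by $\cmat = \var^{\deg(\pmat)+1}\mat{L}$. Nonsingularity of $\mat{N} = \trsp{[\trsp{\pmat} \;\; \trsp{\cmat}]}$ (equivalently, that it has full rank $\cdim$) comes from \cref{step:RCP:singular} not triggering. The degree inequality $\min(\rdeg{\cmat}) > \deg(\popov)$ follows from $\min(\rdeg{\cmat}) = \deg(\pmat)+1$ together with $\deg(\popov) \le \deg(\pmat)$, which is given by the minimality of the tuple of row degrees recalled in \cref{ssec:preliminaries:rdeg_reduced} applied to the reduced matrix $\popov$. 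The row-degree condition is exactly what \cref{step:RCP:baddegs} verifies. Then \cref{lem:completion_correct_verifiable} asserts that $\popov$ is precisely the submatrix of $\matt{\popov}$ consisting of rows of degree less than $\min(\rdeg{\cmat})$, which is what Step 6 returns.

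For the failure probability, I would show that whenever the random draw produces a completion, the algorithm cannot fail, so that the failure probability is bounded by the probability that $\cmat$ is not a completion, which is exactly what \cref{lem:completion_probability} controls. Concretely, if $\cmat$ is a completion then by \cref{dfn:completion} the matrix $\trsp{[\trsp{\popov} \;\; \trsp{\cmat}]}$ is reduced, hence has full rank $\cdim$; since $\pmat$ and $\popov$ share a row space, the stacked matrix $\mat{N}$ has the same row space and therefore also has rank $\cdim$, so \cref{step:RCP:singular} does not trigger. Then \cref{lem:completion_correct_verifiable} guarantees that $\rdeg{\matt{\popov}}$ contains a permutation of $\rdeg{\cmat}$, so \cref{step:RCP:baddegs} does not trigger either.

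For the cost, I would first record that
\[
    \sumVec{\rdeg{\mat{N}}} \;=\; \sumVec{\rdeg{\pmat}} + (\cdim-\rdim)(\deg(\pmat)+1),
\]
so that $\lceil\sumVec{\rdeg{\mat{N}}}/\cdim\rceil$ matches the bracketed quantity in the statement up to a constant. Then the singularity test at \cref{step:RCP:singular}, carried out via the partial linearization of \cite{GuSaStVa12} followed by the kernel basis algorithm of \cite{ZhoLab12} as described in the discussion preceding the algorithm, costs $\softO{\cdim^\expmatmul \lceil\sumVec{\rdeg{\mat{N}}}/\cdim\rceil}$; and the call to \algoname{NonsingularPopov} at \cref{step:RCP:nonsingularpopov} costs the same by \cref{thm:nonsing_popov_NeiVu} applied to $\mat{N}$. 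Forming $\cmat$ at Step 2 is free of arithmetic, and Steps 5--6 only involve degree comparisons. Substituting the identity above gives the claimed bound, and the inclusion into $\softO{\cdim^\expmatmul \deg(\pmat)}$ follows from $\sumVec{\rdeg{\pmat}} \le \rdim\deg(\pmat)$.

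The only delicate point is the nonsingularity half of the failure-probability argument: the notion of completion is defined relative to $\popov$, not $\pmat$, so one must pass from reducedness of $\trsp{[\trsp\popov\;\;\trsp\cmat]}$ to full column rank of $\mat{N}$ by explicitly invoking equality of row spaces. Everything else is a direct application of the lemmas already proved.
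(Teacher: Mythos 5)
Your proposal is correct and follows the same route as the paper: correctness from \cref{lem:completion_correct_verifiable} given that \cref{step:RCP:singular,step:RCP:baddegs} pass, failure probability bounded by the non-completion probability of \cref{lem:completion_probability}, and the cost dominated by \cref{step:RCP:nonsingularpopov} via \cref{thm:nonsing_popov_NeiVu} with $\Delta = \sumVec{\rdeg{\pmat}} + (\cdim-\rdim)(\deg(\pmat)+1)$. One tiny implicit step worth making explicit: the identity $\min(\rdeg{\cmat}) = \deg(\pmat)+1$ uses that $\cmat$ has no zero row, which itself follows from the nonsingularity of $\trsp{[\trsp\pmat\;\;\trsp\cmat]}$ established by \cref{step:RCP:singular} passing.
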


Indeed, from \cref{thm:nonsing_popov_NeiVu}, \cref{step:RCP:nonsingularpopov}
uses $\softO{\cdim^\expmatmul \lceil \Delta/\cdim \rceil}$ operations where
$\Delta = \sumVec{\rdeg{\trsp{[\trsp{\pmat} \;\; \trsp{\cmat}]}}} =
\sumVec{\rdeg{\pmat}} + (\cdim-\rdim) (\deg(\pmat)+1)$.

While other Popov form algorithms could be used, that of \cite{NeiVu17} allows
us to take into account the average row degree of $\pmat$.  Indeed, if
$\sumVec{\rdeg{\pmat}} \ll \rdim\deg(\pmat)$ and $\cdim-\rdim \ll \cdim$, the
cost bound above is asymptotically better than $\softO{\cdim^\expmatmul
\deg(\pmat)}$.

\begin{remark}
  \label{rmk:completion_knownpiv}
  As we mentioned in \cref{ssec:preliminaries:pivots_popov}, the pivot index of $\pmat$ is a subset of $\pivSupp[]$.
  Therefore, one can let $\mat L$ be zero at all columns where $\pmat$ has a pivot, or indices one otherwise knows appear in $\pivSupp[]$.
  If $\pmat$ has uneven degrees (e.g. it has the form $\pmatt \shiftMat{\shifts}$ for some shift $\shifts$, see \cref{ssec:preliminaries:shifted_forms}), then this can be particularly worthwhile.
  In the case where for some reason we know $\pivSupp[]$, then $\mat L$ can simply be taken such that $\matcols{\mat L}{\{1,\ldots,\cdim\} \setminus \pivSupp[]}$ is the identity matrix.
  In that case, \cref{algo:lasvegas_popov} becomes deterministic.
\end{remark}

\section{Computing the pivot support}
\label{sec:find_pivsupp}

We now consider a matrix $\pmat \in \pmatSpace$ with $\rdim<\cdim$, possibly rank-deficient, and we
focus on the computation of its pivot support $\pivSupp[]$.
In \cref{sub:find_pivsupp_viafactor}, we give a deterministic algorithm which is efficient when
$\cdim \in \bigO{\rdim}$.
In \cref{sub:wide_matrix} we explain how this can be used iteratively to efficiently find the pivot
support when $\rdim \ll \cdim$.

\subsection{Deterministic pivot support computation via column basis factorization}
\label{sub:find_pivsupp_viafactor}

Our approach stems from the fact (see \cref{lem:pivsupp_via_factorization})
that $\pivSupp[]$ is also the pivot support of any basis of the
\emph{saturation} of the row space of $\pmat$
\cite[Sec.\,II.\S2.4]{BourbakiCommAlg2}, defined as
\[
  \{ \rowgrk{\lambda} \pmat , \rowgrk{\lambda} \in \field(\var)^{1\times\rdim} \} \cap \polMatSpace[1][\rdim].
\]
This notion of saturation was already used in \cite{ZhoLab13} in order to
compute column bases of $\pmat$ by relying on the following factorization:

\begin{lemma}[{\cite[Sec.\,3]{ZhoLab13}}]
  \label{lem:colbas_factorization}
  Let $\pmat \in \pmatSpace$ have rank $\matrk\in\ZZp$, let $\kerbas \in
  \polMatSpace[\cdim][(\cdim-\matrk)]$ be a right kernel basis of $\pmat$, and
  let $\rfac \in \polMatSpace[\matrk][\cdim]$ be a left kernel basis of
  $\kerbas$.  Then, we have $\pmat = \colbas \rfac$ for some column basis
  $\colbas \in \polMatSpace[\rdim][\matrk]$ of $\pmat$.
\end{lemma}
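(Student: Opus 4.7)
The plan is to first obtain the factorization $\pmat = \colbas \rfac$ directly from the kernel-basis properties, and only afterwards verify that the resulting left factor $\colbas$ is actually a column basis of $\pmat$. For the latter step I will compare $\colbas$ against an arbitrary column basis $\colbas^* \in \polMatSpace[\rdim][\matrk]$ of $\pmat$, whose existence is guaranteed because the column space of $\pmat$ is a submodule of the free $\polRing$-module of $\rdim$-dimensional column vectors and $\polRing$ is a principal ideal domain.

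Building the factorization should be immediate. Since $\kerbas$ is a right kernel basis of $\pmat$, we have $\pmat \kerbas = \matz$, so every row of $\pmat$ lies in the left kernel of $\kerbas$. Because $\rfac$ is a basis of that left kernel, each row of $\pmat$ is a $\polRing$-linear combination of rows of $\rfac$; collecting these coefficients yields $\colbas \in \polMatSpace[\rdim][\matrk]$ with $\pmat = \colbas \rfac$. Since $\rfac$ has full row rank $\matrk$, the factor $\colbas$ is uniquely determined by this equation.

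The main step is to show that this $\colbas$ is a column basis of $\pmat$. Writing $\pmat = \colbas^* \rfac^*$ for some $\rfac^* \in \polMatSpace[\matrk][\cdim]$, the full column rank of $\colbas^*$ combined with $\pmat \kerbas = \matz$ gives $\rfac^* \kerbas = \matz$. Applying the kernel-basis property of $\rfac$ then yields $\mat{V} \in \polMatSpace[\matrk][\matrk]$ with $\rfac^* = \mat{V} \rfac$, whence $\pmat = \colbas^* \mat{V} \rfac$; the uniqueness noted above forces $\colbas = \colbas^* \mat{V}$. From this identity, the columns of $\colbas$ lie in the $\polRing$-span of the columns of $\colbas^*$, which is the column space of $\pmat$. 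Conversely, $\pmat = \colbas \rfac$ shows that the columns of $\pmat$ lie in the $\polRing$-span of the columns of $\colbas$. Hence the column spaces of $\colbas$ and $\pmat$ coincide, and because $\pmat$ has rank $\matrk$ the factorization $\pmat = \colbas \rfac$ forces $\colbas$ to have rank $\matrk$ as well, so its $\matrk$ columns are $\polRing$-linearly independent and form a basis of this common column space.

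The subtle point I expect to be the main obstacle is resisting the temptation to try to prove that $\mat{V}$ is unimodular; that statement would also imply the conclusion, but seems harder to establish directly. Comparing column spans avoids this and uses only the characterizing property of a kernel basis together with the uniqueness of the factorization through $\rfac$.
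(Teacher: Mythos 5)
The paper itself gives no proof of this lemma: it is imported from Zhou and Labahn (2013, Sec.~3), so there is no in-paper argument to compare against. Your proof is correct and self-contained. The factorization $\pmat = \colbas\rfac$ with $\colbas$ unique follows at once from $\rfac$ being a left kernel basis of $\kerbas$ with full row rank; and for the column-basis claim, your route of introducing an arbitrary column basis $\colbas^*$, deriving $\colbas = \colbas^*\mat{V}$ from the uniqueness of $\colbas$, comparing column spans in both directions, and finishing with a rank count is sound. As you note, this sidesteps proving $\mat{V}$ unimodular (although it is, as a corollary of $\colbas$ and $\colbas^*$ being two bases of the same free module). An alternative and arguably more structural route is to observe that the left kernel of $\kerbas$ is the saturation of the row space of $\pmat$, hence a direct summand of $\polMatSpace[1][\cdim]$; thus $\rfac$ admits a right inverse $\mat{T}\in\polMatSpace[\cdim][\matrk]$ with $\rfac\mat{T}=\idMat[\matrk]$, and then $\colbas = \colbas\rfac\mat{T} = \pmat\mat{T}$ directly places the columns of $\colbas$ in the column space of $\pmat$, after which the same rank count finishes. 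Both arguments are valid; yours uses only the defining property of a kernel basis and avoids invoking saturation.
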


One can easily verify that the left kernel of $\kerbas$ is precisely the
saturation of $\pmat$, and therefore the matrix $\rfac$ is a (row) basis of this
saturation.  Here, we are particularly interested in the following consequence
of this result:

\begin{lemma}
  \label{lem:pivsupp_via_factorization}
  The matrices $\pmat$ and $\rfac$ in \cref{lem:colbas_factorization} have the
  same pivot support, that is, $\pivSupp[] = \pivSupp[][\rfac]$.
\end{lemma}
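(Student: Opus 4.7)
The plan is to exploit the factorization $\pmat = \colbas \rfac$ from \cref{lem:colbas_factorization}. Since $\colbas \in \polMatSpace[\rdim][\matrk]$ has full column rank $\matrk$, the right kernels of $\pmat$ and $\rfac$ coincide; hence their rows span the same $\field(\var)$-subspace of $\field(\var)^{1\times\cdim}$. In particular both $\pivSupp[][\pmat]$ and $\pivSupp[][\rfac]$ have cardinality exactly $\matrk$, so it will suffice to prove the single inclusion $\pivSupp[][\rfac] \subseteq \pivSupp[][\pmat]$ and then invoke cardinality.

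To establish this inclusion, I would let $\matt{P} \in \polMatSpace[\matrk][\cdim]$ be the Popov form of $\rfac$ with pivot indices $j_1,\ldots,j_\matrk$, so that $\pivSupp[][\rfac] = \{j_1,\ldots,j_\matrk\}$. Each row $\matrows{\matt{P}}{k}$ lies in the row space of $\rfac$, hence in the $\field(\var)$-row span of $\pmat$; clearing denominators yields a nonzero $g_k \in \polRing$ together with a polynomial vector $\rowgrk{\lambda}_k \in \polMatSpace[1][\rdim]$ satisfying $\rowgrk{\lambda}_k \pmat = g_k \matrows{\matt{P}}{k}$. Multiplication by the nonzero scalar $g_k$ shifts the degree of every nonzero entry uniformly and preserves the zero pattern, so the pivot index of $g_k \matrows{\matt{P}}{k}$ equals $j_k$. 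Thus $\rowgrk{\lambda}_k \pmat$ is a nonzero vector in the row space of $\pmat$ whose pivot index is $j_k$, and the second property listed at the end of \cref{ssec:preliminaries:pivots_popov} then forces $j_k \in \pivSupp[][\pmat]$.

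I do not foresee a serious technical obstacle. The argument reduces to two elementary observations: the $\field(\var)$-row spans of $\pmat$ and $\rfac$ coincide (which is exactly where the full column rank of $\colbas$ enters), and the pivot index of a nonzero vector is invariant under scaling by a nonzero polynomial. The most delicate conceptual point is perhaps noticing that, thanks to the equal-cardinality count $\matrk$, just one of the two inclusions is enough; the reverse then follows for free without any symmetric argument on the submodule inclusion itself.
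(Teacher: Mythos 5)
Your proof is correct, but you take the inclusion in the opposite direction from the paper. The paper's proof is a one-liner: since $\pmat = \colbas\rfac$, the $\polRing$-row space of $\pmat$ sits \emph{inside} that of $\rfac$, so the second property in \cref{ssec:preliminaries:pivots_popov} directly gives $\pivSupp[][\pmat] \subseteq \pivSupp[][\rfac]$, and then the same cardinality count you use finishes the job. You instead prove $\pivSupp[][\rfac] \subseteq \pivSupp[][\pmat]$, which is the harder direction: the row space of $\rfac$ is the saturation of that of $\pmat$, so it is in general strictly \emph{larger} as a $\polRing$-module, and you are forced to pass to $\field(\var)$-coefficients and clear denominators, then argue that multiplication by a nonzero polynomial scalar preserves pivot indices. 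That extra step is all valid (invertibility of $\colbas$ over $\field(\var)$ gives the containment of fraction-field row spans, and pivot index is indeed unchanged under nonzero polynomial scaling), but it is machinery the paper avoids by simply reading the factorization $\pmat = \colbas\rfac$ as a $\polRing$-module containment in the convenient direction. Both proofs hinge on exactly the same two ingredients --- the pivot-index property from \cref{ssec:preliminaries:pivots_popov} and the equal rank $\matrk$ --- so there is no difference in generality; your version is just a bit less economical.
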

\begin{proof}
  Since $\pmat = \colbas \rfac$, the row space of $\pmat$ is contained in that of $\rfac$.
  Hence, by the properties at the end of \cref{ssec:preliminaries:pivots_popov}, $\pivSupp[] \subseteq \pivSupp[][\rfac]$ as sets.
  But since $\pmat$ and $\rfac$ both have rank $\matrk$, both pivot supports have exactly $\matrk$ different elements, and must be equal.
\end{proof}

We will read off $\pivSupp[][\rfac]$ from $\rfac$ by ensuring that this matrix is in ordered weak Popov form.
First, we obtain a column reduced right kernel basis
$\kerbas$ of $\pmat$
using \algoname{MinimalKernelBasis} (see \cref{thm:kernel_ZhLaSt}).
However, the degree profile of $\kerbas$ prevents us from using the same algorithm to compute a
left kernel basis $\rfac$ efficiently, since the average row degree of $\kerbas$ could be as large
as $\matrk \deg(\pmat)$.
To circumvent this issue, we combine the observations that $\deg(\rfac)$ is bounded and that $\kerbas$ has small average \emph{column} degree to conclude that $\rfac$ can be efficiently obtained via an approximant basis (see \cref{sec:preliminaries}).

\begin{lemma}
  \label{lem:rfac_via_appbas}
  Let $\pmat \in \pmatSpace$ have rank $\matrk\in\ZZp$ and let $\kerbas \in
  \polMatSpace[\cdim][(\cdim-\matrk)]$ be a right kernel basis of $\pmat$.
  Then, any left kernel basis of $\kerbas$ which is in reduced form must have
  degree at most $d=\deg(\pmat)$.  As a consequence, if $\appbas \in
  \polMatSpace[\cdim]$ is a reduced basis of $\modApp{\orders}{\kerbas}$, where
  $\orders = \cdeg{\kerbas}+d+1 \in \ZZ^{n-r}$, then the submatrix $\popov$ of $\appbas$
  formed by its rows of degree at most $d$ is a reduced left kernel basis of
  $\kerbas$.
\end{lemma}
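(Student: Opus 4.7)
The plan is to prove the two claims in order. For the degree bound, let $\rfac$ be a reduced left kernel basis of $\kerbas$, and I will invoke \cref{lem:colbas_factorization} to write $\pmat = \colbas \rfac$ for some column basis $\colbas \in \polMatSpace[\rdim][\matrk]$ of $\pmat$. Reading row $i$ of this factorization gives $\matrow{\pmat}{i} = \matrow{\colbas}{i}\,\rfac$. Since $\rfac$ is reduced, the predictable degree property yields
\[
d \;\ge\; \deg(\matrow{\pmat}{i}) \;=\; \max_{j}\,\bigl(\deg(c_{ij}) + \deg(\matrow{\rfac}{j})\bigr),
\]
where $c_{ij}$ denotes the $(i,j)$ entry of $\colbas$. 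To turn this into a per-row bound on $\rfac$, I will use that every column of $\colbas$ is nonzero (its columns form a basis): picking any $i$ with $c_{ij}\neq 0$ gives $\deg(c_{ij}) \ge 0$, hence $\deg(\matrow{\rfac}{j}) \le d$. Although \cref{lem:colbas_factorization} only furnishes one particular $\rfac$, every reduced left kernel basis has the same multiset of row degrees by the minimality properties recalled in \cref{ssec:preliminaries:rdeg_reduced}, so the bound transfers to the general case.

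For the second claim, I will first match $\modApp{\orders}{\kerbas}$ against the left kernel of $\kerbas$ in the low-degree range. Clearly every left kernel element lies in $\modApp{\orders}{\kerbas}$. Conversely, if $\row{p} \in \modApp{\orders}{\kerbas}$ satisfies $\deg(\row{p}) \le d$, then for each column index $j$, the entry $(\row{p}\,\kerbas)_j$ has degree at most $d + \cdeg{\kerbas}_j = \order_j - 1$, yet it must be divisible by $\var^{\order_j}$, forcing it to be zero; thus $\row{p}\,\kerbas = \matz$. This identifies the rows of $\popov$ as elements of the left kernel of $\kerbas$.

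To obtain the reverse inclusion on row spaces, I let $\rfac$ be a reduced left kernel basis of $\kerbas$. By the first part $\deg(\rfac) \le d$, so each row of $\rfac$ lies in $\modApp{\orders}{\kerbas}$ and can be written as $\rowgrk{\lambda}\,\appbas$ for some $\rowgrk{\lambda}$. Applying the predictable degree property of the reduced basis $\appbas$ together with $\deg(\rowgrk{\lambda}\,\appbas) \le d$, only rows of $\appbas$ of degree at most $d$ can appear with a nonzero coefficient, so every row of $\rfac$ lies in the row span of $\popov$. Hence the row spaces of $\rfac$ and $\popov$ coincide, equal to the left kernel of $\kerbas$. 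Finally, $\popov$ is reduced: its leading matrix is precisely the submatrix of $\leadingMat[]{\appbas}$ formed by the selected rows, and this inherits full row rank from $\leadingMat[]{\appbas}$. The main obstacle I anticipate is the first claim, where turning the row-wise predictable degree equation into a bound on $\deg(\rfac)$ hinges crucially on the fact that no column of $\colbas$ is zero; everything else reduces to a bookkeeping exercise with predictable degree properties and with the divisibility conditions built into $\orders = \cdeg{\kerbas}+d+1$.
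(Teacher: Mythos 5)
Your proof is correct and follows essentially the same route as the paper: part one via the factorization $\pmat = \colbas\rfac$ from \cref{lem:colbas_factorization} and the predictable degree property, part two by observing that any degree-$\le d$ approximant annihilates $\kerbas$ exactly and that rows of $\appbas$ of degree $\le d$ suffice to express a degree-$\le d$ kernel basis. The one place you add useful detail is the per-column argument for part one: the paper's one-line appeal to the predictable degree property (``$\deg(\rfac) \le \deg(\colbas\rfac)$'') silently requires that $\colbas$ have no zero column, and you make that explicit. Your closing remark about ``only one particular $\rfac$'' is a misreading --- \cref{lem:colbas_factorization} quantifies over an arbitrary left kernel basis, so choosing $\rfac$ reduced at the outset already covers the general case --- but this is harmless since the multiset-of-row-degrees fallback would also work.
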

\begin{proof}
  Let $\rfac \in \polMatSpace[\matrk][\cdim]$ be a left kernel basis of
  $\kerbas$ in reduced form.  By \cref{lem:colbas_factorization}, $\pmat =
  \colbas \rfac$ for some matrix $\colbas \in \polMatSpace[\rdim][\matrk]$.
  Then, the predictable degree property implies that $\deg(\rfac) \le
  \deg(\colbas\rfac) = d$.

  For the second claim (which is a particular case of
  \cite[Lem.\,4.2]{ZhoLab13}), note that $\popov$ is reduced as a subset of the
  rows of a reduced matrix.  Besides, $\cdeg{\popov \kerbas} < \orders$ by
  construction, hence $\popov\kerbas = \matz \bmod \shiftMat{\orders}$ implies
  $\popov\kerbas= \matz$.  It remains to show that $\popov$ generates the left
  kernel of $\kerbas$. Indeed, there exists a basis of this kernel which has
  degree at most $d$, and on the other hand any vector of degree at most $d$ in
  this kernel is in particular in $\modApp{\orders}{\kerbas}$ and therefore is
  a combination of the rows of $\appbas$; using the predictable degree
  property, we obtain that this combination only involves rows from the
  submatrix $\popov$.
\end{proof}

If we compute $\appbas$ in ordered weak Popov form, then the submatrix $\popov$ is in ordered weak Popov
form as well, and therefore $\pivSupp[]$ can be directly read off from it.
The computation of an approximant basis in ordered weak Popov form can be done via the algorithm of
\cite{JeNeScVi16}, which returns one in Popov form.

\begin{algobox}
  \algoInfo{PivotSupportViaFactor}
  \label{algo:pivsupp_viafactor}

  \dataInfo{Input}{
    matrix $\pmat \in \pmatSpace$ with $\rdim \le \cdim$.
  }

  \dataInfo{Output}{
    the pivot support $\pivSupp[]$ of $\pmat$.
  }

  \algoSteps{
    \item \algoword{If} $\pmat=\matz$ \algoword{then return} the empty tuple $() \in \ZZp^0$
    \item $\kerbas \in \polMatSpace[\cdim][(\cdim-\matrk)] \assign \algoname{MinimalKernelBasis}(\pmat)$
      \label{step:PSVF:kernel}
    \item $\appbas \in \polMatSpace[\cdim] \assign$ ordered weak Popov basis of
      $\modApp{\orders}{\kerbas}$, with $\orders =
      \cdeg{\kerbas}+(\deg(\pmat)+1) \in \ZZ^{\cdim-\matrk}$
      \label{step:PSVF:wPf}
    \item $\rfac \in \polMatSpace[\matrk][\cdim] \assign$ the rows of $\appbas$ of degree at most $\deg(\pmat)$
    \item \algoword{Return} the pivot index of $\rfac$
      \label{step:PSVF:return}
  }
\end{algobox}

\begin{proposition}
  \label{prop:algo:pivsupp_viafactor}
  \cref{algo:pivsupp_viafactor} is correct and uses $\softO{\cdim^\expmatmul
  \deg(\pmat)}$ operations in $\field$.
\end{proposition}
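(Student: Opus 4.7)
The plan is to invoke \cref{lem:rfac_via_appbas} to justify that \cref{step:PSVF:wPf} produces the desired left kernel basis $\rfac$, and then \cref{lem:pivsupp_via_factorization} to conclude that its pivot index equals $\pivSupp[]$. Concretely, \cref{step:PSVF:kernel} calls \algoname{MinimalKernelBasis} on $\pmat$, which by \cref{thm:kernel_ZhLaSt} returns a column reduced right kernel basis $\kerbas \in \polMatSpace[\cdim][(\cdim-\matrk)]$ satisfying $\sumVec{\cdeg{\kerbas}} \le \matrk \deg(\pmat)$. Then \cref{step:PSVF:wPf} computes an ordered weak Popov basis $\appbas \in \polMatSpace[\cdim]$ of $\modApp{\orders}{\kerbas}$ via the Popov approximant basis algorithm of \cite{JeNeScVi16} (a Popov basis is in particular an ordered weak Popov basis). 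By \cref{lem:rfac_via_appbas}, the submatrix $\rfac$ extracted at Step 4 is a reduced left kernel basis of $\kerbas$; and since extracting a subset of the rows of an ordered weak Popov matrix preserves the strict increase of pivot indices, $\rfac$ is itself in ordered weak Popov form. Its pivot index therefore coincides with its pivot support $\pivSupp[][\rfac]$, which by \cref{lem:pivsupp_via_factorization} equals $\pivSupp[]$. The trivial case $\pmat = \matz$ is dispatched in Step 1.

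For the cost, \cref{step:PSVF:kernel} uses $\softO{\cdim^\expmatmul \deg(\pmat)}$ operations by \cref{thm:kernel_ZhLaSt}. For \cref{step:PSVF:wPf}, combining $\sumVec{\cdeg{\kerbas}} \le \matrk \deg(\pmat)$ with a shift of $\deg(\pmat)+1$ on $\cdim-\matrk$ entries yields
\[
  \sumVec{\orders} \;\le\; \matrk \deg(\pmat) + (\cdim-\matrk)(\deg(\pmat)+1) \;\le\; \cdim(\deg(\pmat)+1).
\]
The approximant basis algorithm of \cite{JeNeScVi16}, applied to $\kerbas$ (whose row dimension is $\cdim$) at the non-uniform order $\orders$, then runs in $\softO{\cdim^{\expmatmul-1} \sumVec{\orders}} \subseteq \softO{\cdim^\expmatmul \deg(\pmat)}$ operations; the remaining steps are free, yielding the claimed total.

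The main delicate point is that the approximant basis step must be carried out with a cost scaling with $\sumVec{\orders}$ rather than with $\cdim \max(\orders)$: the entries of $\orders$ can individually be as large as $\deg(\kerbas) + \deg(\pmat)$, so a uniform-order bound would overshoot by a factor of order $\matrk$. Invoking the average-order variant of \cite{JeNeScVi16} is precisely what delivers the target $\softO{\cdim^\expmatmul \deg(\pmat)}$ bound and makes the whole approach go through.
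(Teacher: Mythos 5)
Your proof is correct and follows essentially the same route as the paper: it invokes \cref{thm:kernel_ZhLaSt} for the kernel basis and its column-degree bound, \cref{lem:rfac_via_appbas} and \cref{lem:pivsupp_via_factorization} for the correctness of reading off the pivot support from the ordered weak Popov approximant basis, and the order-sum bound $\sumVec{\orders} \le \cdim(\deg(\pmat)+1)$ combined with the cost $\softO{\cdim^{\expmatmul-1}\sumVec{\orders}}$ of \cite{JeNeScVi16} for the complexity. The only remark in the paper's proof not explicitly mirrored is that the rank $\matrk$ is obtained implicitly from the kernel basis dimensions, but this is a minor bookkeeping point that your argument also implicitly handles.
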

\begin{proof}
  Note that we compute the rank of $\pmat$ as $\matrk$ by the indirect assignment at \cref{step:PSVF:kernel}.
  Besides, $\rfac$ is in ordered weak Popov form since it is a submatrix formed
  by rows of $\appbas$ itself in ordered weak Popov form.  This implies that
  \cref{step:PSVF:return} indeed returns the pivot support of $\rfac$.
  Then, the correctness
  directly follows from \cref{lem:pivsupp_via_factorization,lem:rfac_via_appbas}.

  By \cref{thm:kernel_ZhLaSt}, \cref{step:PSVF:kernel} costs
  $\softO{\cdim^\expmatmul d}$, where $d = \deg(\pmat)$, and
  $\sumVec{\cdeg{\kerbas}} \le \matrk d$.  Thus, the sum of the approximation
  order defined at \cref{step:PSVF:wPf} is $\sumVec{\orders} =
  \sumVec{\cdeg{\kerbas}} + (\cdim-\matrk) (d+1) < \cdim (d+1)$. Then,
  this step uses $\softO{\cdim^{\expmatmul-1} \sumVec{\orders}} \subseteq
  \softO{\cdim^\expmatmul d}$ operations \cite[Thm.\,1.4]{JeNeScVi16}.
\end{proof}

Note that in this algorithm we do not require that $\pmat$ has full rank.
The only reason why we assume $\rdim \le \cdim$ is because the cost bound for the computation of a
kernel basis at \cref{step:PSVF:kernel} is not clear to us in the case $\rdim>\cdim$ (the same
assumption is made in \cite{ZhLaSt12}).

Here, it seems more difficult to take average degrees into account than in
\cref{algo:lasvegas_popov}.  While the average degree of the $\rdim$ columns of
$\pmat$ with largest degree could be taken into account by the kernel basis
algorithm of \cite{ZhLaSt12}, it seems that the computation of $\rfac$ via an
approximant basis remains in $\softO{\cdim^\expmatmul d}$ nevertheless.

\subsection{The case of wide matrices}
\label{sub:wide_matrix}

In this section we will deal with pivots of submatrices $\matcols{\pmat}{J}$, where $J = \{ j_1 < \ldots < j_k \} \subseteq \{ 1,\ldots,n \}$.
To use column indices of $\matcols{\pmat}{J}$ in $\pmat$, we introduce for any such $J$ the operator $\phi_J: \{1,\ldots,k\} \rightarrow \{1,\ldots,n\}$ satisfying $\phi_J(i) = j_i$.
We abuse notation by applying $\phi_J$ element-wise to tuples, such as in $\phi_J(\pivSupp[][\matcols{\pmat}{J}])$.

The following simple lemma is the crux of the algorithm:
\begin{lemma}
  \label{lem:submatrix_popov}
  Let $\pmat \in \pmatSpace$, and consider any set of indices $J \subseteq \{1,\ldots,\cdim\}$.
  Then $(\pivSupp[] \cap J) \subseteq \phi_J(\pivSupp[][\matcols{\pmat}{J}])$ with equality whenever $\pivSupp[] \subseteq J$.
\end{lemma}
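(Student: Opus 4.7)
The plan is to exploit the two properties of pivot supports recalled at the end of \cref{ssec:preliminaries:pivots_popov}: first, that the pivot index of any nonzero element of the row space of a matrix lies in the matrix's pivot support, and second, that a matrix in ordered weak Popov form has pivot index equal to its own pivot support. The guiding principle behind both parts of the proof is that restricting a matrix to a column subset $J$ leaves the pivot index of any of its rows unchanged (up to relabelling via $\phi_J^{-1}$), so long as the pivot column of that row lies in $J$: indeed, the pivot condition only involves the row's degree and the degrees of entries at columns strictly to the right of the pivot, both of which are unaffected by dropping columns outside $J$ provided the pivot column itself is retained.

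For the inclusion $(\pivSupp[] \cap J) \subseteq \phi_J(\pivSupp[][\matcols{\pmat}{J}])$, I would pick any $j \in \pivSupp[] \cap J$, take a row $\matrow{\popov}{i}$ of the Popov form $\popov$ of $\pmat$ whose pivot index is $j$, and write $\matrow{\popov}{i} = \rowgrk{\lambda} \pmat$ for some $\rowgrk{\lambda} \in \polMatSpace[1][\rdim]$. The guiding principle then gives that $\rowgrk{\lambda} \matcols{\pmat}{J}$ has pivot index $\phi_J^{-1}(j)$, so the second bullet of \cref{ssec:preliminaries:pivots_popov} applied to $\matcols{\pmat}{J}$ yields $\phi_J^{-1}(j) \in \pivSupp[][\matcols{\pmat}{J}]$, whence $j \in \phi_J(\pivSupp[][\matcols{\pmat}{J}])$.

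For the equality under the assumption $\pivSupp[] \subseteq J$, I would show directly that $\matcols{\popov}{J}$ is an ordered weak Popov basis of the row space of $\matcols{\pmat}{J}$. The row space is unchanged by column restriction of a row basis, since $\polRing$-linear combinations of rows commute with projection onto a subset of columns. The ordered weak Popov property holds because every pivot column of $\popov$ lies in $J$ by hypothesis, so by the guiding principle each row of $\matcols{\popov}{J}$ retains its pivot, now at the shifted position $\phi_J^{-1}(\pi_i)$; the strict increase of pivot positions is preserved because $\phi_J^{-1}$ is monotone on $J$. The first bullet of \cref{ssec:preliminaries:pivots_popov} then identifies $\pivSupp[][\matcols{\pmat}{J}]$ with the pivot index of $\matcols{\popov}{J}$, namely $\phi_J^{-1}(\pivSupp[])$, and applying $\phi_J$ yields $\phi_J(\pivSupp[][\matcols{\pmat}{J}]) = \pivSupp[] = \pivSupp[] \cap J$.

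There is no serious obstacle; the only step that deserves care is the guiding principle, but it is a direct consequence of the definition of pivot index.
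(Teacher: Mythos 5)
Your proof is correct and essentially mirrors the paper's: both hinge on the observation that restricting a row to columns in $J$ preserves its pivot (after relabelling via $\phi_J^{-1}$) whenever the pivot column lies in $J$, combined with the two facts recalled at the end of \cref{ssec:preliminaries:pivots_popov}. The only small difference is in the equality step, where the paper obtains the reverse inclusion by applying the same observation to arbitrary vectors in the row space of $\matcols{\pmat}{J}$, while you instead verify directly that $\matcols{\popov}{J}$ is an ordered weak Popov form of $\matcols{\pmat}{J}$ and read off the pivot support — the same ingredients in a slightly different arrangement.
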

\begin{proof}
  If a vector $\row v \in \polMatSpace[1][\cdim]$ in the row space of $\pmat$ is such that $\pivSupp[][\row v] \in J$, then $\pivSupp[][\row v] = \phi_J(\pivSupp[][\matcols{\row v}{J}])$.
  This implies $(\pivSupp[] \cap J) \subseteq \phi_J(\pivSupp[][\matcols{\pmat}{J}])$ since the pivot index of any vector in the row space of $\pmat$ (resp.~$\matcols{\pmat}{J}$) appears in $\pivSupp[]$ (resp.~$\pivSupp[][\matcols{\pmat}{J}]$), see \cref{ssec:preliminaries:pivots_popov}.
  It also immediately implies the equality whenever $\pivSupp[][\pmat] \subseteq J$.
\end{proof}

These properties lead to a fast method for computing the pivot support when
$\cdim \gg \rdim$, relying on a black box \algoname{PivotSupport} which
efficiently finds the pivot support when $\cdim \in \bigO{\rdim}$:
one first considers the $2\rdim$ left columns
$\matcols{\pmat}{\{1,\ldots,2\rdim\}}$ and uses \algoname{PivotSupport} to
compute their pivot support $\tuple{\piv}_1$.  Then, \cref{lem:submatrix_popov}
suggests to discard all columns of $\pmat$ in $\{1,\ldots,2\rdim\}\setminus\tuple{\piv}_1$, thus
obtaining a matrix $\pmat_1$. Then, we repeat the same process to obtain
$\pmat_2,\pmat_3,$ etc.

\begin{algobox}
  \algoInfo{WideMatrixPivotSupport}
  \label{algo:widemat_findsupp}

  \dataInfo{Input}{
    matrix $\pmat \in \pmatSpace$.
  }

  \dataInfo{Output}{
    the pivot support $\pivSupp[]$ of $\pmat$.
  }

  \dataInfo{Assumption}{the algorithm \algoname{PivotSupport} takes as input $\pmat$ and returns
    $\pivSupp[]$.
  }

  \algoSteps{
    \item \algoword{If} $\cdim \le 2\rdim$ \algoword{then return} $\algoname{PivotSupport}(\pmat)$
    \item $\tuple\piv_0 \assign \algoname{PivotSupport}(\matcols{\pmat}{\{1,\ldots,2\rdim\}})$
    \item $\pmatt \assign [\matcols{\pmat}{\tuple\piv_0} \;\;\; \matcols{\pmat}{\{2\rdim+1,\ldots,\cdim\}}]$
    \item $[\tuple\piv_1 \;\; \tuple\piv_2 ] \assign \algoname{WideMatrixPivotSupport}(\pmatt)$, \\
      such that $\max(\tuple\piv_1) \leq \#\tuple\piv_0$ and $\min(\tuple\piv_2) > \#\tuple\piv_0$.
    \item \algoword{Return} $\left[ \phi_{\tuple\piv_0}(\tuple\piv_1) \;\;\; \phi_{\{2m+1,\ldots,n\}}(\tuple\piv_2) \right]$
  }
\end{algobox}

\begin{proposition}
  \label{prop:algo:widemat_findsupp}
  \cref{algo:widemat_findsupp} is correct. It uses at most $\lceil \cdim /
  \rdim \rceil$ calls to \algoname{PivotSupport}, each with a $\rdim \times k$
  submatrix of $\pmat$ as input, where $k \le 2 \rdim$.  If $\rdim\le\cdim$ and
  \algoname{PivotSupport} is \cref{algo:pivsupp_viafactor}, then
  \cref{algo:widemat_findsupp} uses $\softO{\rdim^{\expmatmul-1} \cdim
  \deg(\pmat)}$ operations in $\field$.
\end{proposition}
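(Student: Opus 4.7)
The plan is to establish correctness by induction on $\cdim$, and then bound separately the number of recursive calls and the cost of each.

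For correctness, the base case $\cdim \le 2\rdim$ is immediate from the stated behavior of the black box \algoname{PivotSupport}. For the inductive step, I would apply \cref{lem:submatrix_popov} twice. First, with $J = \{1,\ldots,2\rdim\}$, it gives $\pivSupp[] \cap J \subseteq \phi_J(\tuple\piv_0)$, so the columns dropped when forming $\pmatt$ carry no index of $\pivSupp[]$. Setting $J' = \phi_J(\tuple\piv_0) \cup \{2\rdim+1,\ldots,\cdim\}$ so that $\pmatt = \matcols{\pmat}{J'}$, one then has $\pivSupp[] \subseteq J'$, and the equality case of \cref{lem:submatrix_popov} yields $\pivSupp[] = \phi_{J'}(\pivSupp[][\pmatt])$. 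By the induction hypothesis, the recursive call returns $\pivSupp[][\pmatt]$ in the split form $[\tuple\piv_1 \;\; \tuple\piv_2]$, and unfolding $\phi_{J'}$ on the two pieces recovers exactly the tuple returned on the final line of \cref{algo:widemat_findsupp}.

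For the count of calls, the key observation is that $\#\tuple\piv_0$ equals the rank of $\matcols{\pmat}{\{1,\ldots,2\rdim\}}$ and is therefore at most $\rdim$. Hence $\pmatt$ has at most $\rdim + (\cdim - 2\rdim) = \cdim - \rdim$ columns, so each recursion step shrinks the column count by at least $\rdim$ until the base case fires. Unrolling yields at most $\lceil \cdim/\rdim \rceil$ calls to \algoname{PivotSupport}, each on an $\rdim \times k$ submatrix of $\pmat$ with $k \le 2\rdim$ and degree at most $\deg(\pmat)$.

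When \algoname{PivotSupport} is instantiated with \cref{algo:pivsupp_viafactor}, \cref{prop:algo:pivsupp_viafactor} bounds each call by $\softO{k^\expmatmul \deg(\pmat)} \subseteq \softO{\rdim^\expmatmul \deg(\pmat)}$; multiplying by $\lceil \cdim/\rdim \rceil$ gives the announced $\softO{\rdim^{\expmatmul-1} \cdim \deg(\pmat)}$. I expect the main source of care to be purely notational: making sure the maps $\phi_{(\cdot)}$ compose correctly across the recursion so that the final output consists of indices into the original $\pmat$ rather than into any intermediate submatrix, and that the output tuple is sorted in the order implicit in the definition of $\pivSupp[]$.
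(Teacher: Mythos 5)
Your proof is correct and expands, in considerably more detail, exactly the argument the paper compresses into one sentence: correctness is driven by \cref{lem:submatrix_popov} applied first to $J = \{1,\ldots,2\rdim\}$ and then (via the equality case) to the reduced set $J'$ indexing $\pmatt$, and the call count follows from $\#\tuple\piv_0 \le \rdim$ so that the column count drops by at least $\rdim$ per recursion. Your instantiation of \cref{prop:algo:pivsupp_viafactor} per call and the multiplication by $\lceil\cdim/\rdim\rceil$ match the paper's intent precisely.
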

\begin{proof}
  The correctness follows from \cref{lem:submatrix_popov}, and the operation
  count is obvious.  If using \cref{algo:pivsupp_viafactor} for
  \algoname{PivotSupport}, the correctness and cost bound follow from
  \cref{prop:algo:pivsupp_viafactor}.
\end{proof}

\section{Preliminaries on shifted forms}
\label{sec:preliminaries_shifted}

\subsection{Shifted forms}
\label{ssec:preliminaries:shifted_forms}

The notions of reduced and Popov forms presented in
\cref{ssec:preliminaries:rdeg_reduced,ssec:preliminaries:pivots_popov} can be extended by
introducing additive integer weights in the degree measure for vectors, following
\cite[Sec.\,3]{BarBul92}: a \emph{shift} is a tuple
$\shifts = (\shift{1},\ldots,\shift{\cdim}) \in \ZZ^\cdim$, and the \emph{shifted degree} of a row
vector $\row{p} = [p_1 \; \cdots \; p_\cdim] \in \polMatSpace[1][\cdim]$ is
\[
  \rdeg[\shifts]{\row{p}} = \max(\deg(p_1)+\shift{1},
  \ldots,\deg(p_\cdim)+\shift{\cdim})
  = \rdeg{\row{p}\shiftMat{\shifts}},
\]
where $\shiftMat{\shifts} =
\diag{\var^{\shift{1}},\ldots,\var^{\shift{\cdim}}}$.  Note that here
$\row{p}\shiftMat{\shifts}$ may be over the ring of Laurent polynomials
if $\min(\shifts) < 0$; below, actual computations will always remain over
$\polRing$.
Note that with $\shifts=\tuple 0$ we recover the notion of degree used in the previous sections.

This leads to shifted reduced forms for cases where one is interested in matrices whose rows minimize the $\shifts$-degree, instead of the usual $\unishift$-degree.
The generalized definitions from \cref{sec:preliminaries} can be
concisely described as follows.  For a matrix $\pmat \in
\polMatSpace[\rdim][\cdim]$, its $\shifts$-row degree is $\rdeg[\shifts]{\pmat}
= \rdeg{\pmat\shiftMat{\shifts}}$.  If $\pmat$ has no zero row, its
$\shifts$-leading matrix is $\leadingMat[\shifts]{\pmat} =
\leadingMat[]{\pmat\shiftMat{\shifts}}$, and the $\shifts$-pivot index and
entries of $\pmat$ are the pivot index and entries of
$\pmat\shiftMat{\shifts}$.  The \emph{$\shifts$-pivot degree} of $\pmat$ is the
tuple of the degrees of its $\shifts$-pivot entries; this is equal to
$\rdeg[\shifts]{\pmat} - \subTuple{\shifts}{J}$, where $J$ is the
$\shifts$-pivot index of $\pmat$ and $\subTuple{\shifts}{J}$ the corresponding
subshift.

If $\pmat$ has no zero row and $\rdim\le\cdim$, then $\pmat$ is in
$\shifts$-reduced, $\shifts$-(ordered) weak Popov or $\shifts$-Popov form if
$\pmat\shiftMat{\shifts}$ has the respective non-shifted form, whenever
$\min(\shifts) \geq 0$.  Since adding a constant to all the entries of
$\shifts$ simply shifts the $\shifts$-degree of vectors by this constant, this
does not change the $\shifts$-leading matrix or the $\shifts$-pivots, and thus
does not affect the shifted forms. Therefore we can extend the definitions of
these to also cover $\shifts$ with negative entries; one may alternatively
assume $\min(\shifts)=0$ without loss of generality.

The $\shifts$-Popov form $\popov$ of a matrix $\pmat\in\pmatSpace$ is the
unique row basis of $\pmat$ which is in $\shifts$-Popov form.  The
$\shifts$-pivot support of $\pmat$ is the $\shifts$-pivot index of $\popov$ and
is denoted by $\pivSupp[\shifts][\pmat] \in \ZZp^{\matrk}$, where $\matrk$ is
the rank of $\pmat$.  For more details on shifted forms, we refer to
\cite{BeLaVi06}.

Computationally, it is folklore that finding the shifted Popov form easily
reduces to the non-shifted case: given a matrix $\pmat \in
\polMatSpace[\rdim][\cdim]$ and a nonnegative shift $\shifts \in \ZZ^\cdim$,
the non-shifted Popov form $\matt{P}$ of $\pmat \shiftMat{\shifts}$ has the
form $\matt P = \popov \shiftMat{\shifts}$, with $\popov$ the $\shifts$-Popov
form of $\pmat$. If $\rdim < \cdim$ and the computation of $\matt{P}$ can be
carried out in $\softO{\rdim^{\expmatmul-1} \cdim \deg(\pmat)}$ operations,
this approach yields $\popov$ in $\softO{\rdim^{\expmatmul-1} \cdim
(\deg(\pmat) + \amp)}$.  While this cost is satisfactory whenever $\amp \in
\bigO{\deg(\pmat)}$, one may hope for improvements especially when $\amp >
\rdim \deg(\pmat)$. Indeed, \cref{eqn:bound_deg_popov_global} in
\cref{lem:popov_bounds} shows $\deg(\popov) \leq \rdim \deg(\pmat)$, suggesting
the target cost $\softO{\rdim^{\expmatmul} \cdim \deg(\pmat)}$ for the
computation of $\popov$.

\subsection{Hermite form}
\label{ssec:hermite_def}

A matrix $\hermite = [h_{i,j}] \in \polMatSpace[\matrk][\cdim]$ with
$\matrk\le\cdim$ is in \emph{Hermite form}
\cite{Hermite1851,MacDuffee33,Newman72} if there are indices $1 \le j_1 <
\cdots < j_\matrk \le \cdim$ such that:
\begin{itemize}
  \item $h_{i,j} = 0$ for $1 \le j < j_i$ and $1\le i\le\matrk$,
  \item $h_{i,j_i}$ is monic (therefore nonzero) for $1\le i\le\matrk$,
  \item $\deg(h_{i',j_i})<\deg(h_{i,j_i})$ for $1\le i' < i\le\matrk$.
\end{itemize}
We call $(j_1,\ldots,j_\matrk)$ the \emph{Hermite pivot index} of $\hermite$;
note that it is precisely the column rank profile of $\hermite$.

For a matrix $\pmat \in \pmatSpace$, its Hermite form $\hermite \in
\polMatSpace[\matrk][\cdim]$ is the unique row basis of $\pmat$ which is in
Hermite form.  We call \emph{Hermite pivot support} of $\pmat$ the Hermite
pivot index of $\hermite$.  Note that this is also the column rank profile of
$\pmat$, since $\pmat$ is unimodularly equivalent to $\hermite$ (up to padding
$\hermite$ with zero rows).

For a given $\pmat$, the Hermite form can be seen as a specific shifted Popov
form: defining the shift $\shifts[h] = (\cdim t,\ldots,2t,t)$ for any
$t>\deg(\hermite)$, the $\shifts[h]$-Popov form of $\pmat$ coincides with its
Hermite form \cite[Lem.\,2.6]{BeLaVi06}.  Besides, the $\shifts[h]$-pivot index
of $\hermite$ is $(j_1,\ldots,j_\matrk)$; in other words, the Hermite pivot
support $\pivSupp[\shifts[h]]$ is the column rank profile of $\pmat$.  

\subsection{Degree bounds for shifted Popov forms}
\label{ssec:degree_bounds}

The next result states that the unimodular transformation $\umat$ between
$\pmat$ and its $\shifts$-Popov form $\popov$ only depends on the submatrices
of $\pmat$ and $\popov$ formed by the columns in the $\shifts$-pivot support.
It also gives useful degree bounds for the matrices $\umat$ and $\popov$; for a
more general study of such bounds, we refer to \cite[Sec.\,5]{BeLaVi06}.

\begin{lemma}
  \label{lem:popov_bounds}
  Let $\pmat \in \pmatSpace$ have full rank with $\rdim \le \cdim$, let
  $\shifts\in\shiftSpace$, let $\popov\in\pmatSpace$ be the $\shifts$-Popov
  form of $\pmat$, and let $\tuple\piv = \pivSupp[\shifts][\pmat]$ be the
  $\shifts$-pivot index of $\popov$.  Then $\matcols{\pmat}{\tuple\piv} \in
  \polMatSpace[\rdim]$ is nonsingular, $\matcols{\popov}{\tuple\piv}$ is its
  $\subTuple{\shifts}{\tuple\piv}$-Popov form, and $\umat =
  \matcols{\popov}{\tuple\piv} \matcols{\pmat}{\tuple\piv}^{-1} \in
  \polMatSpace[\rdim]$ is the unique unimodular matrix such that $\umat \pmat =
  \popov$.

  Furthermore, we have the following degree bounds:
  \begin{align}
      \deg(\popov)
      & \le \deg(\pmat) + \amp,
        \label{eqn:bound_deg_popov_shift} \\[0.05cm]
      \cdeg{\matcol{\umat}{i}}
      & \le \sumVec{\rdeg{\pmat}} - \rdeg{\matrow{\pmat}{i}}
         \; \text{ for } 1\le i\le\rdim,
         \label{eqn:bound_cdeg_umat} \\[0.05cm]
      \deg{\umat}
      & \le \sumVec{\cdeg{\matcols{\pmat}{\tuple\piv}}},
        \label{eqn:bound_deg_umat} \\[0.05cm]
      \deg(\popov)
      & \le \min(\sumVec{\rdeg{\pmat}},\sumVec{\cdeg{\pmat'}}) \le \rdim \deg(\pmat) \nonumber \\[-0.1cm]
      & \text{where $\pmat'$ is $\pmat$ with its zero columns removed}.
        \label{eqn:bound_deg_popov_global}
  \end{align}
\end{lemma}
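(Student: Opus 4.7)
I would organize the proof around the identity $\umat\matcols{\pmat}{\tuple\piv} = \matcols{\popov}{\tuple\piv}$ obtained by restricting $\umat\pmat = \popov$ to the pivot columns, and I would deduce both the structural claims and the four degree bounds from it. Structurally, because $\tuple\piv$ lists the $\shifts$-pivot indices in strictly increasing order and $\popov$ is in $\shifts$-Popov form, the $\shifts$-pivot entries of $\popov$ appear on the diagonal of $\matcols{\popov}{\tuple\piv}$: they are monic, maximize the $\subTuple{\shifts}{\tuple\piv}$-shifted degree in their row, and dominate their columns by the Popov normalization. Hence $\matcols{\popov}{\tuple\piv}$ is in $\subTuple{\shifts}{\tuple\piv}$-Popov form with $\subTuple{\shifts}{\tuple\piv}$-leading matrix $\idMat[\rdim]$, so it is nonsingular. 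Since $\pmat$ has full row rank $\rdim$, the unimodular $\umat$ with $\umat\pmat = \popov$ is unique; restricting to the pivot columns then shows $\matcols{\pmat}{\tuple\piv}$ is nonsingular and yields the explicit formula $\umat = \matcols{\popov}{\tuple\piv}\matcols{\pmat}{\tuple\piv}^{-1}$.

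Bound \cref{eqn:bound_deg_popov_shift} would then follow from the minimality of sorted $\shifts$-row degrees applied to the $\shifts$-reduced $\popov$ and the row basis $\pmat$: this gives $\max\rdeg[\shifts]{\popov} \le \max\rdeg[\shifts]{\pmat} \le \deg(\pmat) + \max\shifts$, and hence $\deg(\popov_{k,j}) \le \rdeg[\shifts]{\matrow{\popov}{k}} - \shift{j} \le \deg(\pmat) + \amp$ for every entry.

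Bounds \cref{eqn:bound_cdeg_umat,eqn:bound_deg_umat} come from Cramer's rule on $\umat\matcols{\pmat}{\tuple\piv} = \matcols{\popov}{\tuple\piv}$. For \cref{eqn:bound_cdeg_umat}, I would express each $\umat_{k,l}$ as $\det(B_{k,l})/\det(\matcols{\pmat}{\tuple\piv})$, where $B_{k,l}$ is obtained by replacing row $l$ of $\matcols{\pmat}{\tuple\piv}$ by the $k$-th row of $\matcols{\popov}{\tuple\piv}$, then bound the numerator by Hadamard on row degrees. The key closing observation is that the (unshifted) degree of the $k$-th row of $\matcols{\popov}{\tuple\piv}$ is at most $\deg(\det(\matcols{\pmat}{\tuple\piv}))$: this follows from Popov column dominance (which bounds that row degree by the maximum pivot degree) together with nonnegativity of pivot degrees (so the maximum is at most the sum), combined with $\umat$ being unimodular (so $\deg(\det(\matcols{\pmat}{\tuple\piv})) = \deg(\det(\matcols{\popov}{\tuple\piv}))$, which equals the sum of pivot degrees by diagonal dominance). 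For \cref{eqn:bound_deg_umat}, I would use $\umat = \matcols{\popov}{\tuple\piv}\matcols{\pmat}{\tuple\piv}^{-1}$ with the cofactor expansion of the inverse, bounding each $(\rdim-1)\times(\rdim-1)$ minor by column-degree Hadamard and closing with the same pivot-degree identity.

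For \cref{eqn:bound_deg_popov_global}, the $\sumVec{\rdeg{\pmat}}$ inequality follows from $\popov = \umat\pmat$ combined with \cref{eqn:bound_cdeg_umat}: each entry satisfies $\deg(\popov_{k,j}) \le \max_i(\cdeg{\matcol{\umat}{i}} + \rdeg{\matrow{\pmat}{i}}) \le \sumVec{\rdeg{\pmat}}$, and $\sumVec{\rdeg{\pmat}} \le \rdim\deg(\pmat)$ is immediate. For $\deg(\popov) \le \sumVec{\cdeg{\pmat'}}$ I would split by columns: when $j\in\tuple\piv$, Popov column dominance gives $\deg(\popov_{k,j}) \le \deg(\popov_{l,\piv_l})$ with $j = \piv_l$, and the latter is at most $\sum_{l'}\deg(\popov_{l',\piv_{l'}}) = \deg(\det(\matcols{\pmat}{\tuple\piv})) \le \sumVec{\cdeg{\pmat'}}$; when $j\notin\tuple\piv$ with $\matcol{\pmat}{j}\ne\matz$, I would write $\matcol{\popov}{j} = \matcols{\popov}{\tuple\piv}\row v$ with $\row v = \matcols{\pmat}{\tuple\piv}^{-1}\matcol{\pmat}{j}$, apply Cramer to the column system $\matcols{\pmat}{\tuple\piv}\row v = \matcol{\pmat}{j}$ with column-degree Hadamard, and exploit that $\sumVec{\cdeg{\matcols{\pmat}{\tuple\piv}}} + \cdeg{\matcol{\pmat}{j}} \le \sumVec{\cdeg{\pmat'}}$ since these are $\rdim+1$ distinct nonzero columns of $\pmat$. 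I expect this column-degree part to be the main subtlety: naively combining $\popov = \umat\pmat$ with \cref{eqn:bound_deg_umat} only yields $\sumVec{\cdeg{\pmat'}} + \deg(\pmat)$, so the column-wise Cramer argument is what closes the spurious $\deg(\pmat)$ gap.
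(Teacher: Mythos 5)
Your sketch is correct overall and recovers all the claims, but it takes a genuinely different route for bound~\eqref{eqn:bound_cdeg_umat}. The paper argues via the shift $-\minDegs$: since $\matcols{\popov}{\tuple\piv}$ is $(-\minDegs)$-reduced with $(-\minDegs)$-row degree $\unishift$, the predictable degree property identifies $\rdeg[-\minDegs]{\matcols{\pmat}{\tuple\piv}}$ with $\shiftt=\rdeg{\umat^{-1}}$, and the bound on $\umat$ is then extracted from the cofactor expansion of $\umat^{-1}$ (a matrix with polynomial inverse up to a nonzero constant). You instead apply Cramer's rule directly to the square system $\umat\,\matcols{\pmat}{\tuple\piv}=\matcols{\popov}{\tuple\piv}$, write $\umat_{k,l}$ as a quotient of determinants, bound the numerator by a row-degree Hadamard estimate, and absorb the row of $\matcols{\popov}{\tuple\piv}$ through $\max(\minDegs)\le\sumVec{\minDegs}=\deg(\det(\matcols{\pmat}{\tuple\piv}))$. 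Both arguments are valid; yours is more self-contained and avoids the shifted predictable-degree machinery, while the paper's is shorter once that machinery is taken as known. For the $\sumVec{\cdeg{\pmat'}}$ half of \eqref{eqn:bound_deg_popov_global} you perform a second Cramer computation, whereas the paper simply reuses \eqref{eqn:bound_deg_umat} column by column, namely $\cdeg{\matcol{\popov}{j}}\le\deg(\umat)+\cdeg{\matcol{\pmat}{j}}\le\sumVec{\cdeg{\matcols{\pmat}{\tuple\piv}}}+\cdeg{\matcol{\pmat}{j}}\le\sumVec{\cdeg{\pmat'}}$, the last step being the same ``$\rdim+1$ distinct nonzero columns'' observation you invoke; so the ``spurious $\deg(\pmat)$ gap'' you worry about never arises in the paper's proof either. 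One small slip to fix: the $\subTuple{\shifts}{\tuple\piv}$-leading matrix of $\matcols{\popov}{\tuple\piv}$ is unit lower triangular rather than the identity, since a sub-diagonal entry can still attain the shifted row degree; this does not affect the nonsingularity conclusion you draw from it.
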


\begin{proof}
  Let $\matt{P} = \matcols{\pmat}{\tuple\piv}$, $\pmatt =
  \matcols{\pmat}{\tuple\piv}$, and $\shiftss=\subTuple{\shifts}{\tuple\piv}$.
  Note first that $\matt{P}$ is nonsingular and in $\shiftss$-Popov form. Let
  $\mat{V} \in \polMatSpace[\rdim]$ be any unimodular matrix such that $\mat{V}
  \pmat = \popov$. Then in particular $\mat{V} \pmatt = \matt{P}$, hence
  $\pmatt$ is nonsingular and unimodularly equivalent to $\matt{P}$, which is
  therefore the $\shiftss$-Popov form of $\pmatt$. Besides, we have $\mat{V} =
  \matt{P} \pmatt^{-1} = \umat$.

  It remains to prove the degree bounds.  The first one comes from the
  minimality of $\popov$.  Indeed, since $\popov$ is an $\shifts$-reduced form
  of $\pmat$ we have $\max(\rdeg[\shifts]{\popov}) \le
  \max(\rdeg[\shifts]{\pmat})$; the left-hand side
  of this inequality is at least $\deg(\popov)+\min(\shifts)$ while its
  right-hand side is at most $\deg(\pmat)+\max(\shifts)$.

  Let $\minDegs \in \NN^\rdim$ be the $\shifts$-pivot degree of $\popov$.
  Then, $\matt{P}$ is in $(-\minDegs)$-Popov form with
  $\rdeg[-\minDegs]{\matt{P}} = \unishift$ and $\cdeg{\matt{P}} = \minDegs$
  \cite[Lem.\,4.1]{JeNeScVi16}.  Besides, $\matt{P}$ is column reduced and thus
  $\sumVec{\cdeg{\matt{P}}} = \deg(\det(\matt{P}))$
  \cite[Sec.\,6.3.2]{Kailath80}, hence $\sumVec{\minDegs} =
  \deg(\det(\pmatt))$.

  Let $\shiftt = (t_1,\ldots,t_m) = \rdeg{\umat^{-1}}$.  We obtain
  $\rdeg[-\minDegs]{\pmatt} = \rdeg[-\minDegs]{\umat^{-1} \matt{P}} =
  \rdeg[\unishift]{\umat^{-1}} = \shiftt$ by the predictable degree property
  (with shifts, see e.g. \cite[Lem.\,2.17]{Zhou12}).  Now, $\umat$ being the
  transpose of the
  matrix of cofactors of $\umat^{-1}$ divided by the constant $\det(\umat^{-1})
  \in \field\setminus\{0\}$, we obtain $\cdeg{\matcol{\umat}{i}} \le
  \sumVec{\shiftt} - t_i$ for $1\le i\le \rdim$. Since $-\minDegs \le
  \unishift$ we have $\shiftt = \rdeg[-\minDegs]{\pmatt} \le \rdeg{\pmat}$,
  hence $\sumVec{\shiftt} - t_i \le \sumVec{\rdeg{\pmat}} -
  \rdeg{\matrow{\pmat}{i}}$. This proves \eqref{eqn:bound_cdeg_umat}.

  Every entry of the adjugate of $\pmatt$ has degree at most
  $\sumVec{\cdeg{\pmatt}}$.  Then, $\umat = \matt{P} \pmatt^{-1}$ gives
  $\deg(\umat) \le \deg(\matt{P}) - \deg(\det(\pmatt)) +
  \sumVec{\cdeg{\pmatt}}$.  This yields \eqref{eqn:bound_deg_umat} since
  $\deg(\matt{P}) = \max(\minDegs) \le \sumVec{\minDegs} = \deg(\det(\pmatt))$.

  The second inequality in \eqref{eqn:bound_deg_popov_global} is implied by
  $\sumVec{\rdeg{\pmat}} \le \rdim \deg(\pmat)$. Besides, from $\popov=
  \umat\pmat=\sum_{i=1}^m \matcol{\umat}{i} \matrow{\pmat}{i}$ we see that
  \eqref{eqn:bound_cdeg_umat} implies $\deg(\popov) \le \sumVec{\rdeg{\pmat}}$.
  For $j\in\tuple\piv$ we have $\cdeg{\matcol{\popov}{j}} \le
  \sumVec{\cdeg{\matt{P}}} = \deg(\det(\pmatt)) \le \sumVec{\cdeg{\pmat'}}$.
  Now, let $j \in \{1,\ldots,\cdim\}\setminus\tuple\piv$: if $\matcol{\pmat}{j}
  = \matz$ then $\matcol{\popov}{j} = \matz$, and otherwise it follows from
  \eqref{eqn:bound_deg_umat} that $\cdeg{\matcol{\popov}{j}} =
  \deg(\umat\matcol{\pmat}{j}) \le \sumVec{\cdeg{\pmatt}} +
  \cdeg{\matcol{\pmat}{j}} \le \sumVec{\cdeg{\pmat'}}$.
\end{proof}

\section{Shifted Popov form when the pivot support is known}
\label{sec:known_pivotsupport}

Now, we focus on computing the $\shifts$-Popov form $\popov$ of $\pmat$ when
the $\shifts$-pivot support $\pivSupp$ is known; here, $\pmat$ has full rank
with $\rdim<\cdim$.

To exploit the knowledge of $\tuple\piv = \pivSupp$, a first approach follows
from \cref{rmk:completion_knownpiv}: use \cref{algo:lasvegas_popov} with
$\mat{L}$ such that $\matcols{\mat L}{\{1,\ldots,\cdim\} \setminus \tuple\piv}$
is the identity matrix and its other columns are zero. Then, it is easily
checked that $\cmat = \mat L \shiftMat{\max(\rdeg[\shifts]\pmat) - \shifts}$ is
a completion of $\pmatt = \pmat \shiftMat{\shifts}$; hence
\cref{algo:lasvegas_popov} returns the Popov form $\matt{P} = \popov
\shiftMat{\shifts}$ of $\pmatt$.  This yields~$\popov$ deterministically in
$\softO{\cdim^\expmatmul(\deg(\pmat) + \amp)}$ operations.

Both factors in this cost bound are unsatisfactory in some parameter ranges.
When $\cdim\gg\rdim$, a sensible improvement would be to replace the matrix
dimension factor $\cdim^\expmatmul$ with one which has the exponent on the
smallest dimension, such as $\rdim^{\expmatmul-1}\cdim$. Similarly, when $\amp
\gg \rdim\deg(\pmat)$, a sensible improvement would be to replace the
polynomial degree factor $\deg(\pmat) + \amp$ with one suggested by the bounds
on $\deg(\popov)$ given in \cref{eqn:bound_deg_popov_global} of
\cref{lem:popov_bounds}.

We achieve both improvements with our second approach, which works in three
steps and is formalised as \cref{algo:knownsupp_popov}.  First, we compute the
$\subTuple{\shifts}{\tuple\piv}$-Popov form of the submatrix
$\matcols{\pmat}{\tuple\piv}$, which can be done efficiently since this
submatrix is square and nonsingular.  Then, we use polynomial matrix division
to obtain the unimodular transformation $\umat \in \polMatSpace[\rdim]$ such
that $\matcols{\pmat}{\pivSupp} = \umat \, \matcols{\popov}{\pivSupp}$.
Lastly, we compute the remaining part of the $\shifts$-Popov form of $\pmat$ as
$\umat^{-1} \matcols{\pmat}{\{1,\ldots,n\}\setminus \tuple\piv}$.  Note that,
even for $\shifts=\unishift$, all entries of $\umat^{-1}$ may have degree in
$\Theta(\rdim \deg(\pmat))$; we avoid handling such large degrees by computing
this product truncated at precision $\var^{\delta}$, where $\delta$ is a
(strict) upper bound on the degree of the $\shifts$-Popov form $\popov$.  For
example, if $\shifts = \unishift$ we can take $\delta = 1+\deg(\pmat)$.

\begin{algobox}
  \algoInfo{KnownSupportPopov}
  \label{algo:knownsupp_popov}

  \dataInfos{Input}{
    \item matrix $\pmat \in \pmatSpace$ with full rank and $\rdim<\cdim$,
    \item shift $\shifts \in \shiftSpace$,
    \item the $\shifts$-pivot support $\tuple\piv = \pivSupp$ of $\pmat$,
    \item bound $\degBd\in\ZZp$ on the degree of the $\shifts$-Popov form of $\pmat$.
  }

  \dataInfo{Default}{
    $\degBd = 1+\min(\sumVec{\rdeg{\pmat}},\sumVec{\cdeg{\pmat'}},\deg(\pmat)+\amp)$, \\
    \phantom{Default: } where $\pmat'$ is $\pmat$ with zero columns removed.
  }

  \dataInfo{Output}{
    the $\shifts$-Popov form of $\pmat$.
  }

  \algoSteps{
    \item $\popov \assign$ zero matrix in $\pmatSpace$
    \item $\matcols{\popov}{\tuple\piv} \assign \algoname{NonsingularPopov}(\matcols{\pmat}{\tuple\piv}, \subTuple{\shifts}{\tuple\piv})$
          \label{step:KSP:PopovPart}
    \item $\umat \assign \matcols{\pmat}{\tuple\piv} \matcols{\popov}{\tuple\piv}^{-1} \in \polMatSpace[\rdim][\rdim]$
          \label{step:KSP:CompU}
    \item $\degBd \assign  \min(\degBd,\ 1+ \max(\rdeg[\subTuple{\shifts}{\tuple\piv}]{\matcols{\popov}{\tuple\piv}}) - \min(\subTuple{\shifts}{(1,\ldots,\cdim)\setminus\tuple\piv})$
          \label{step:KSP:precision}
    \item $\matcols{\popov}{\{1,\ldots,\cdim\}\setminus\tuple\piv} \assign \mat{U}^{-1} \, \matcols{\pmat}{\{1,\ldots,\cdim\}\setminus\tuple\piv} \bmod \var^{\degBd}$
          \label{step:KSP:PopovRest}
    \item \algoword{Return} $\popov$
  }
\end{algobox}

\begin{proposition}
  \label{prop:algo:knownsupp_popov}
  \cref{algo:knownsupp_popov} is correct and uses $\softO{\rdim^{\expmatmul-1}
  \cdim \degBd}$ operations in $\field$, where
  \[
    \degBd = 1+\min(\sumVec{\rdeg{\pmat}},\sumVec{\cdeg{\pmat'}},\deg(\pmat)+\amp),
  \]
  and $\pmat'$ is $\pmat$ with zero columns removed.
\end{proposition}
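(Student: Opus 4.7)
For correctness, my plan is to rely on \cref{lem:popov_bounds}. Denoting by $\popov$ the true $\shifts$-Popov form of $\pmat$, the lemma states that $\matcols{\pmat}{\tuple\piv}$ is nonsingular with $\subTuple{\shifts}{\tuple\piv}$-Popov form $\matcols{\popov}{\tuple\piv}$, validating \cref{step:KSP:PopovPart}, and that the unique unimodular $\mat{V}$ with $\mat{V}\pmat = \popov$ equals $\matcols{\popov}{\tuple\piv}\matcols{\pmat}{\tuple\piv}^{-1}$. Hence the matrix $\umat$ of \cref{step:KSP:CompU} is $\mat{V}^{-1}$, so $\matcols{\popov}{J} = \umat^{-1}\matcols{\pmat}{J}$ for $J = \{1,\ldots,\cdim\}\setminus\tuple\piv$, and the truncation in \cref{step:KSP:PopovRest} will be lossless provided $\deg(\matcols{\popov}{J}) < \degBd$. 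For the default $\degBd$, I will derive this directly from \eqref{eqn:bound_deg_popov_shift} and \eqref{eqn:bound_deg_popov_global}. For the refinement computed in \cref{step:KSP:precision}, I will use that in each row $i$ of $\popov$ the $\shifts$-pivot entry sits in $\tuple\piv$ and realizes $\rdeg[\shifts]{\matrow{\popov}{i}}$, so this value coincides with the $\subTuple{\shifts}{\tuple\piv}$-row degree of the $i$-th row of $\matcols{\popov}{\tuple\piv}$; then for every $j \in J$, $\deg(\popov_{i,j}) \leq \rdeg[\shifts]{\matrow{\popov}{i}} - s_j$, and maximizing over $i$ while using $s_j \geq \min(\subTuple{\shifts}{J})$ yields the refined bound.

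For the cost, I will first observe that $\deg(\pmat) < \degBd$: since $\pmat$ is nonzero, some row of $\pmat$ and some column of $\pmat'$ have degree $\deg(\pmat)$, so $\sumVec{\rdeg{\pmat}}$ and $\sumVec{\cdeg{\pmat'}}$ are both at least $\deg(\pmat)$, and $\amp \geq 0$ gives $\deg(\pmat) + \amp \geq \deg(\pmat)$. Applying \cref{thm:nonsing_popov_NeiVu} to the $\rdim\times\rdim$ matrix $\matcols{\pmat}{\tuple\piv}$ of degree at most $\deg(\pmat)$, the cost of \cref{step:KSP:PopovPart} comes out to $\softO{\rdim^\expmatmul \deg(\pmat)} \subseteq \softO{\rdim^{\expmatmul-1}\cdim\degBd}$ using $\rdim < \cdim$.

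The hard part will be bounding the cost of \cref{step:KSP:CompU,step:KSP:PopovRest}: by \eqref{eqn:bound_deg_umat}, $\umat$ may have degree as large as $\Theta(\rdim\deg(\pmat))$, potentially far exceeding $\degBd$, so it cannot be materialized in full. The key observation is that only $\umat^{-1}\matcols{\pmat}{J} \bmod \var^{\degBd}$ enters the output, so all intermediate computations can be carried out modulo $\var^{\degBd}$. Concretely, I plan to compute $\umat \bmod \var^{\degBd}$ via a truncated right-division of $\matcols{\pmat}{\tuple\piv}$ by $\matcols{\popov}{\tuple\piv}$ (both of degree less than $\degBd$), then $\umat^{-1} \bmod \var^{\degBd}$ by truncated inversion (after a change of variable $\var \mapsto \var - \alpha$ to secure an invertible constant term, if necessary), and finally multiply by $\matcols{\pmat}{J} \bmod \var^{\degBd}$. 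Each of these is an $\rdim\times\rdim$ or $\rdim\times(\cdim-\rdim)$ matrix operation at precision $\degBd$, costing $\softO{\rdim^\expmatmul\degBd}$ or $\softO{\rdim^{\expmatmul-1}\cdim\degBd}$, which matches the claimed bound.
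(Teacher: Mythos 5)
Your correctness argument is essentially the paper's: you invoke \cref{lem:popov_bounds} to validate \cref{step:KSP:PopovPart}, identify $\umat$, and verify the truncation precision (including the refinement at \cref{step:KSP:precision}). That part is fine.

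The cost analysis, however, rests on a false premise. You claim that by \eqref{eqn:bound_deg_umat} the matrix $\umat$ of \cref{step:KSP:CompU} ``may have degree as large as $\Theta(\rdim\deg(\pmat))$, potentially far exceeding $\degBd$, so it cannot be materialized in full.'' This confuses two inverse matrices. In \cref{lem:popov_bounds}, the matrix $\umat$ satisfying $\umat\pmat = \popov$ is $\matcols{\popov}{\tuple\piv}\matcols{\pmat}{\tuple\piv}^{-1}$, and it is \emph{that} matrix whose degree can be as large as $\sumVec{\cdeg{\matcols{\pmat}{\tuple\piv}}}$. The algorithm's $\umat = \matcols{\pmat}{\tuple\piv}\matcols{\popov}{\tuple\piv}^{-1}$ is its \emph{inverse}, and since $\matcols{\popov}{\tuple\piv}$ is column reduced and the quotient is polynomial, one gets $\deg(\umat) \le \deg(\matcols{\pmat}{\tuple\piv}) \le \deg(\pmat) < \degBd$ (cf.\ \cite[Lem.\,3.1]{NeiVu17}). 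So $\umat$ is small and the paper simply computes it in full via \cref{lem:PM-QuoRem} in $\softO{\rdim^\expmatmul\deg(\matcols{\pmat}{\tuple\piv})}$ operations; there is no ``hard part'' to be worked around by truncating its computation.

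There is a secondary issue with your plan for $\umat$ itself: ``truncated right-division of $\matcols{\pmat}{\tuple\piv}$ by $\matcols{\popov}{\tuple\piv}$'' is not a well-defined procedure when $\matcols{\popov}{\tuple\piv}(0)$ is singular, which can certainly happen. The paper avoids this by computing $\umat$ via \algoname{PM-QuoRem}, which performs Newton iteration on \emph{reversed} matrices so that column reducedness (not invertibility at $0$) is what is exploited. Your change-of-variable $\var\mapsto\var-\alpha$ is also unnecessary for \cref{step:KSP:PopovRest}: since $\umat$ is unimodular, $\det(\umat)$ is a nonzero constant and hence $\umat(0)$ is deterministically invertible, so the truncated inverse can be computed by Newton iteration directly; introducing a random $\alpha$ would needlessly make the step Las Vegas. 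Once these points are fixed, the cost bookkeeping you give for \cref{step:KSP:PopovPart,step:KSP:PopovRest} matches the paper's.
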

\begin{proof}
  Let $\mat{Q} \in \pmatSpace$ be the $\shifts$-Popov form of $\pmat$.
  For correctness we prove that $\popov = \mat{Q}$.
  The first part of \cref{lem:popov_bounds} shows that indeed
  $\matcols{\mat{Q}}{\tuple\piv} = \matcols{\popov}{\tuple\piv}$, and that
  $\umat = \matcols{\pmat}{\tuple\piv}\matcols{\popov}{\tuple\piv}^{-1} =
  \matcols{\pmat}{\tuple\piv}\matcols{\mat{Q}}{\tuple\piv}^{-1}$ computed at \cref{step:KSP:CompU}
  is the unimodular matrix such that $\pmat = \umat \mat{Q}$.

  The last item of \cref{lem:popov_bounds} proves that the input default
  value of $\degBd$ is more than $\deg(\mat{Q})$. Besides, by definition of
  $\shifts$-pivots and $\shifts$-Popov form, the column $j$ of $\mat{Q}$ has
  degree at most
  \[
    \max(\rdeg[\subTuple{\shifts}{\tuple\piv}]{\matcols{\mat{Q}}{\tuple\piv}}) - \shift{j}
    = \max(\rdeg[\subTuple{\shifts}{\tuple\piv}]{\matcols{\popov}{\tuple\piv}}) - \shift{j}.
  \]
  It follows that $\degBd >
  \deg(\matcols{\mat{Q}}{\{1,\ldots,\cdim\}\setminus\tuple\piv})$ holds after
  \cref{step:KSP:precision}, and thus the submatrix
  $\matcols{\mat{Q}}{\{1,\ldots,\cdim\}\setminus\tuple\piv}$ is equal to the
  truncated product $\umat^{-1} \,
  \matcols{\pmat}{\{1,\ldots,\cdim\}\setminus\tuple\piv} \bmod \var^{\degBd}$
  computed at \cref{step:KSP:PopovRest}. Hence $\mat{Q} = \popov$.

  Now we explain the cost bound.  \cref{step:KSP:PopovPart} uses
  $\bigO{\rdim^{\expmatmul} \deg(\matcols{\pmat}{\tuple\piv})}$ operations, by
  \cref{thm:nonsing_popov_NeiVu}.  \cref{step:KSP:CompU} has the same cost by
  \cref{lem:PM-QuoRem} below; note that $\matcols{\popov}{\tuple\piv}$ is in
  $\subTuple{\shifts}{\tuple\piv}$-Popov form and thus column reduced.  This is
  within the announced bound since
  \[
    \bigO{\rdim^{\expmatmul} \deg(\matcols{\pmat}{\tuple\piv})} \subseteq
    \bigO{\rdim^{\expmatmul-1}\cdim \deg(\pmat)} 
  \]
  and $\deg(\pmat) \le \degBd$ holds by definition of $\degBd$.

  Finally, \cref{step:KSP:PopovRest} costs
  $\softO{\rdim^{\expmatmul-1}\cdim\degBd}$ operations in $\field$: since
  $\umat(0) \in \matSpace[\rdim]$ is invertible, the truncated inverse of
  $\umat$ is computed by Newton iteration in time $\softO{\rdim^\expmatmul
  \degBd}$; then, the truncated product uses $\softO{\rdim^\expmatmul \lceil
(\cdim-\rdim) / \rdim \rceil \degBd}$ operations.
\end{proof}

At \cref{step:KSP:CompU}, we compute a product of the form $\mat{B}
\mat{A}^{-1}$, knowing that it has polynomial entries and that $\mat{A}$ is
column reduced; in particular, $\deg(\mat{B}\mat{A}^{-1}) \le \deg(\mat{B})$
\cite[Lem.\,3.1]{NeiVu17}.  Then, it is customary to obtain $\mat{B}
\mat{A}^{-1}$ via a Newton iteration on the ``reversed matrices'' (see e.g.
\cite[Chap.\,5]{Sarkar11} and \cite[Chap.\,10]{Zhou12}).

\begin{lemma}
  \label{lem:PM-QuoRem}
  For a column reduced matrix $\mat{A} \in \polMatSpace[\rdim]$ and a matrix
  $\mat{B} \in \polMatSpace[\rdim]$ which is a left multiple of $\mat{A}$, the
  quotient $\mat{B} \mat{A}^{-1}$ can be computed using
  $\softO{\rdim^\expmatmul \deg(\mat{B})}$ operations in $\field$.
\end{lemma}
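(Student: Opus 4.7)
The plan is to apply the classical reversal-and-Newton-iteration technique for polynomial division, extended to matrices via column-wise reversals whose validity crucially relies on the column reducedness of $\mat{A}$.

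Set $D = \deg(\mat{B})$ and write $\mat{Q} = \mat{B}\mat{A}^{-1}$; by \cite[Lem.\,3.1]{NeiVu17}, cited just before the statement, we have $\deg(\mat{Q}) \le D$. Letting $d_1,\ldots,d_\rdim$ denote the column degrees of $\mat{A}$, I would introduce the column-wise reversed matrices
\[
  \tilde{\mat{A}}(\var) = \mat{A}(1/\var)\,\diag{\var^{d_1},\ldots,\var^{d_\rdim}},
  \quad
  \tilde{\mat{B}}(\var) = \mat{B}(1/\var)\,\diag{\var^{D+d_1},\ldots,\var^{D+d_\rdim}},
\]
together with $\tilde{\mat{Q}}(\var) = \var^D \mat{Q}(1/\var)$. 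All three are polynomial matrices: $\tilde{\mat{A}}$ by the definition of column degree; $\tilde{\mat{B}}$ because $\deg(\matcol{\mat{B}}{j}) \le D$ and $d_j \ge 0$; and $\tilde{\mat{Q}}$ because $\deg(\mat{Q}) \le D$. Crucially, $\tilde{\mat{A}}(0)$ is the leading column coefficient matrix of $\mat{A}$, hence invertible by column reducedness.

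A direct substitution $\var \to 1/\var$ in $\mat{Q}\mat{A} = \mat{B}$ followed by multiplication by compensating powers of $\var$ on each side yields $\tilde{\mat{Q}}\tilde{\mat{A}} = \tilde{\mat{B}}$, so since $\deg(\tilde{\mat{Q}}) \le D$ we can recover $\tilde{\mat{Q}} = \tilde{\mat{B}} \tilde{\mat{A}}^{-1} \bmod \var^{D+1}$ exactly. I would then compute $\tilde{\mat{A}}^{-1} \bmod \var^{D+1}$ by the standard Newton iteration over $\lceil \log_2(D+1) \rceil$ precision-doubling steps, at a total cost of $\softO{\rdim^\expmatmul D}$ operations in $\field$, and take the truncated product $\tilde{\mat{B}} \tilde{\mat{A}}^{-1} \bmod \var^{D+1}$ at the same cost. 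Un-reversing finally delivers $\mat{Q}$.

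The main obstacle here is really only the bookkeeping with non-uniform column degrees $d_j$: one has to check that all the reversals remain polynomial and that the substitution identity goes through cleanly under the diagonal compensations. Once $\tilde{\mat{A}}(0)$ is identified with the invertible leading column coefficient matrix, the remaining ingredients---Newton iteration for a matrix inverse modulo a power of $\var$, and one truncated matrix product---are standard and together account for the claimed $\softO{\rdim^\expmatmul \deg(\mat{B})}$ cost.
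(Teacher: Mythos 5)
Your proof is correct and follows essentially the same route as the paper's: the paper simply invokes Steps~1 and~2 of \algoname{PM-QuoRem} from \cite{NeiVu17}, which perform exactly the column-wise reversal (with the leading column coefficient matrix becoming the invertible constant term thanks to column reducedness) followed by a truncated Newton-iteration product. Your rendition makes those reversal formulas and the degree/polynomiality checks explicit, whereas the paper delegates them to the cited algorithm and only adapts the cost analysis, but the underlying technique is identical.
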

\begin{proof}
  We follow Steps 1 and 2 of the algorithm \algoname{PM-QuoRem} from
  \cite{NeiVu17}, on input $\mat{A}$, $\mat{B}$, and $d = \deg(\mat{B})+1$;
  hence the requirement $\cdeg{\mat{B}} < \cdeg{\mat{A}} + (d,\ldots,d)$ is
  satisfied.  It is proved in \cite[Prop.\,3.4]{NeiVu17} that these steps
  correctly compute the quotient $\mat{B} \mat{A}^{-1}$; yet we do a different
  cost analysis since the assumptions on parameters in
  \cite[Prop.\,3.4]{NeiVu17} might not be satisfied here.

  Step 1 of \algoname{PM-QuoRem} computes a type of reversals $\matt{A}$ and
  $\matt{B}$
  of the matrices $\mat{A}$ and $\mat{B}$: this uses no arithmetic operation.
  These matrices also have dimensions $\rdim\times\rdim$ and the constant
  coefficient of $\matt{A}$ is invertible because $\mat A$ is column reduced.
  Step 2 computes the truncated product $\matt{B} \matt{A}^{-1} \bmod
  \var^{d+1}$, which can be done via Newton iteration in
  $\softO{\rdim^{\expmatmul} d}$ operations in $\field$.
\end{proof}

Since \cref{algo:knownsupp_popov} works for an arbitrary shift, it allows us in
particular to find the Hermite form of $\pmat$ when its Hermite pivot support
is known.  It turns out that the latter can be computed efficiently via a
column rank profile algorithm from \cite{Zhou12}.

\begin{proof}[Proof of \cref{thm:hermite}]
  Here, the integer $\degBd$ is defined as
  \begin{equation}
    \label{eqn:degBd_hermite}
    \degBd = 1+\min(\sumVec{\rdeg{\pmat}},\sumVec{\cdeg{\pmat'}}),
  \end{equation}
  where $\pmat'$ is $\pmat$ with zero columns removed.

  Let $\shifts[h] = (\cdim\degBd,\ldots,2\degBd,\degBd)$.
  By \cref{lem:popov_bounds}, $\degBd$ is more than the degree of the Hermite
  form of $\pmat$; therefore the $\shifts[h]$-Popov form of $\pmat$ is also its
  Hermite form (see \cref{ssec:hermite_def}).  Thus, up to the knowledge of the
  Hermite pivot support $\pivSupp[\shifts[h]]$ of $\pmat$, we can compute the
  Hermite form of $\pmat$ using $\softO{\rdim^{\expmatmul-1} \cdim \degBd}$
  operations via \cref{algo:knownsupp_popov}.

  As mentioned in \cref{ssec:hermite_def}, $\pivSupp[\shifts[h]]$ is also the
  column rank profile of $\pmat$.  It is shown in \cite[Sec.\,11.2]{Zhou12} how
  to use row basis and kernel basis computations to obtain this rank profile in
  $\softO{\rdim^{\expmatmul-1}\cdim \sigma}$ operations, where $\sigma =
  \lceil\sumVec{\rdeg{\pmat}}/\rdim\rceil$ is roughly the average row degree of
  $\pmat$.  We have $\sigma\le 1+\sumVec{\rdeg{\pmat}}$ by definition, and it
  is easily verified that $\sumVec{\rdeg{\pmat}}/\rdim \le
  \sumVec{\cdeg{\pmat'}}$, hence $\sigma \le \degBd$.
\end{proof}

\begin{acks}
  The authors are grateful to Cl\'ement Pernet for pointing at the notion of
  saturation.
\end{acks}


\end{document}